\spnewtheorem{result}[theorem]{Result}{\bfseries}{\itshape}
\spnewtheorem*{definition*}{Definition}{\bfseries}{\rmfamily}
\newcommand{\R}{{\mathbb R}}
\renewcommand{\H}{{\mathbb H}}
\newcommand{\Z}{{\mathbb Z}}
\newcommand{\inner}[2]{{\langle{#1},{#2}\rangle}}
\newcommand{\name}[1]{{\upshape #1}}
\newcommand{\net}[1]{{\bfseries #1}}
\newcommand{\group}[1]{$#1$}
\newcommand{\vv}[1]{{\bm #1}}
\newcommand{\M}[1]{{\bm #1}}
\newcommand{\sptwo}{{sp2}}
\newcommand{\href}[2]{#2}
\DeclareMathOperator{\Aut}{Aut}
\DeclareMathOperator{\Ker}{Ker}
\DeclareMathOperator{\SWNT}{SWNT}
\DeclareMathOperator{\Area}{Area}
\DeclareMathOperator{\Vol}{Vol}
\newenvironment{mynote}[1]{\par\addvspace{17\p@}\small\rm
\trivlist\item[\hskip\labelsep{\bfseries #1}]}
{\endtrivlist\addvspace{6\p@}}
\newcommand{\scalebase}{0.100}
\newcommand{\scalehex}{0.20}    % 1.980*scalebase
\newcommand{\scalekite}{0.450}  % 4.507*scalebase
\begin{document}
\title*{Construction of Negatively Curved Cubic Carbon Crystals via Standard Realizations}
\author{Hisashi Naito
\\
\vspace{\baselineskip}
Dedicate to Yumiko Naito
}
\authorrunning{Hisashi Naito}
\institute{Hisashi Naito \at Graduate School of Mathematics, Nagoya University, Nagoya Japan, \email{naito@math.nagoya-u.ac.jp}}
\maketitle
%%%%%%%%%%%%%%%%%%%%%%%%%%%%%%%%%%%%%%%%%%%%%%%%%%%%%%%%%%%%%%%%%%%%%%%%%%%%%%%%%%%%%%%%%
% \par\vspace{\baselineskip}
%%%%%%%%%%%%%%%%%%%%%%%%%%%%%%%%%%%%%%%%%%%%%%%%%%%%%%%%%%%%%%%%%%%%%%%%%%%%%%%%%%%%%%%%%
\abstract{
  In \cite{Naito-Carbon}, 
  we constructed {\em physically stable} 
  {\sptwo} negatively curved cubic carbon structures
  which reticulate a Schwarz P-like surface.
  The method for constructing such crystal structures 
  is based on the notion of the standard realization of abstract crystal lattices.
  In this paper, 
  we expound on the mathematical method to construct such crystal structures.
}
\keywords{discrete surfaces, carbon structures, schwarzites}
%%%%%%%%%%%%%%%%%%%%%%%%%%%%%%%%%%%%%%%%%%
%%%%%%%%%%%%%%%%%%%%%%%%%%%%%%%%%%%%%%%%%%
%%%%%%%%%%%%%%%%%%%%%%%%%%%%%%%%%%%%%%%%%%
\section{Introduction}
\label{sec:intro}
In the last few decades, 
there have been many studies about carbon allotropes.
In this paper, we mainly study their geometric structures.
For example, 
it is considered that
C60, graphene sheets, and single wall carbon nanotubes (SWNTs)
span $2$-dimensional surfaces in $\R^3$.
In particular, 
they span the sphere $S^2$, the plane $\R^2$ and cylinders, respectively.
Since the
Gauss curvature of $S^2$ is positive
and
that of $\R^2$ and cylinder is zero, 
C60 spans a positively curved surface 
and
both of graphene sheets and SWNTs span flat surfaces.
\par
It is natural to ask
whether there is an {\sptwo} carbon allotrope which spans a negatively curved surface.
Mackay-Terrones \cite{Mackay} first constructed
such an {\sptwo} carbon crystal structure.
In this paper, we call this ``Mackay-Terrones C192''.
The structure can be placed on the Schwarz P-surface, 
which is a well-known triply periodic minimal surface.
Since minimal surfaces have negative Gauss curvature, 
we may consider Mackay-Terrones C192 to be a negatively curved ``discrete surface''.
Following Mackay-Terrones' work, 
there have been many examples of carbon crystal structures 
which can be considered to be triply periodic {\sptwo} negatively curved discrete surfaces
(cf. \cite{H.Terrones, H.Terrones3, Lenosky, R.Phillips, M-Z.Huang, Townsend, Mackay-Fowler}).
In \cite{Naito-Carbon}, we also constructed such examples, 
and the main features of our method were
mathematical and systematic.
To construct such structures, 
we used the notion of the standard realization of topological crystal (cf. \cite{Kotani-Sunada, Sunada-Book}).
\par
In this paper, 
first, in Section 2, we summarize the combinatorial structures of {\sptwo} carbon allotropes 
and define the notion of ``discrete surfaces'' and their ``total discrete curvature'' .
In Section \ref{sec:Standard}, 
we expound the notion of topological crystals and their standard realizations.
After that, in Section 4, 
we discuss how to construct such structures by using the standard realization of topological crystals.
\par
The main purpose of this paper is 
to construct {\em physically stable} negatively curved {\sptwo} carbon structures.
In particular, we only consider structures that can be placed on a surface whose symmetry is the same as the Schwarz P-surface.
We note that
when we discuss the mathematical structure of {\sptwo} ``carbon'' networks, 
atomic species do not effect the construction of such structures except for searching for physically stable structures by using first principle calculations.
%%%%%%%%%%%%%%%%%%%%%%%%%%%%%%%%%%%%%%%%%%
%%%%%%%%%%%%%%%%%%%%%%%%%%%%%%%%%%%%%%%%%%
%%%%%%%%%%%%%%%%%%%%%%%%%%%%%%%%%%%%%%%%%%
\section{Minimal surfaces and schwarzites}
\label{sec:Minimal}
Minimal surfaces in $\R^3$ are examples of surfaces with negative curvature.
Moreover, there are many examples of triply periodic minimal surfaces.
One well-known example is the Schwarz P-surface \cite{Schoen, Schoen2, Karcher-Polthier, Molnar}  (cf. Figure \ref{fig:schwartz-P}(a)), 
which has following properties:
\begin{enumerate}
\item 
  \label{s:item:1}
  It is a triply periodic minimal surface, 
  whose fundamental domain is cubic,
\item 
  \label{s:item:2}
  It is parameterized by a conformal map from 
  a Riemann surface with genus $3$, 
\item 
  \label{s:item:3}
  It has the same symmetry
  the \net{pcu} net.
  In other words, its space group is \group{Pm\overline{3}m}.
\end{enumerate}
We remark that 
Schwarz D- and G-surfaces also satisfy properties \ref{s:item:1} and \ref{s:item:2}, 
and the following discussion may apply to 
D- and G-surfaces.
However, for simplicity,
we only discuss the Schwarz P-surface by assuming property \ref{s:item:3}.
Here, \emph{cubic} means that 
the period lattice is orthogonal, 
that is to say, the gram matrix of the period lattice is proportional to the identity matrix.
In \cite{Mackay}, 
Mackay-Terrones construct a carbon crystal structure which is placed on 
the Schwarz P-surface (cf. Figure \ref{fig:schwartz-P}(b)).
Since the structure is placed on a triply periodic minimal surface and
it is a crystal structure, 
hence,
it can be considered to be an example of a ``negatively curved carbon crystal''.
Here, we note that structures which are placed on a triply periodic minimal surface are called \emph{schwarzites},
whereas structures which are placed on a positively curved surface are called \emph{fullerenes}.
More precisely, 
a schwarzite is
a trivalent network (an {\sptwo} structure)
that reticulates a triply periodic hyperbolic surface.
As we mention in Remark \ref{remark:euler}, 
such a structure contains rings larger than hexagons.
Later, in \cite{Lenosky}, Lenosky et al.~also constructed similar structures.
There are many works constructing schwarzites
(cf. \cite{H.Terrones, H.Terrones3, M-Z.Huang, R.Phillips, Other2, Other3, Other4, Other5, Other6}).
%%%%%%%%%%%%%
%%%%% INCLUDE: Inputs/fig_schwartz-P.tex
%%%%% file = ./Inputs/fig_schwartz-P.tex
\begin{figure}
  \centering
  \begin{tabular}{lll}
    (a)&(b) &(c)
    \cr
    \includegraphics[bb=0 0 640 640,scale=0.15]{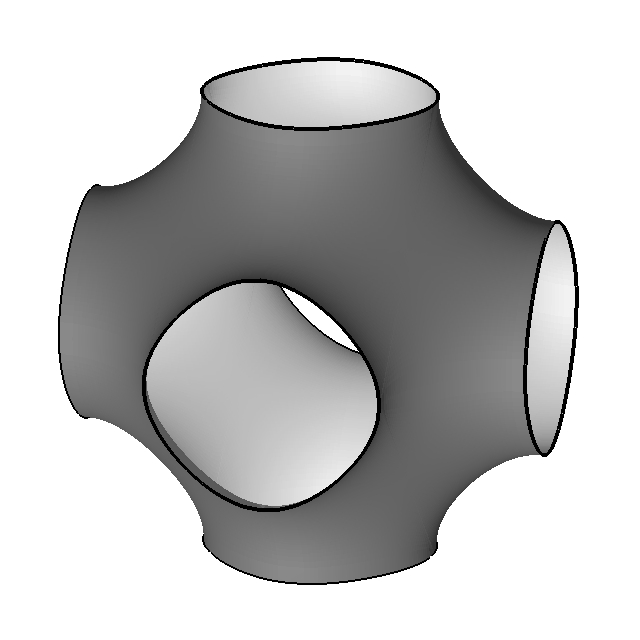}
    &
    \includegraphics[bb=0 0 640 640,scale=0.15]{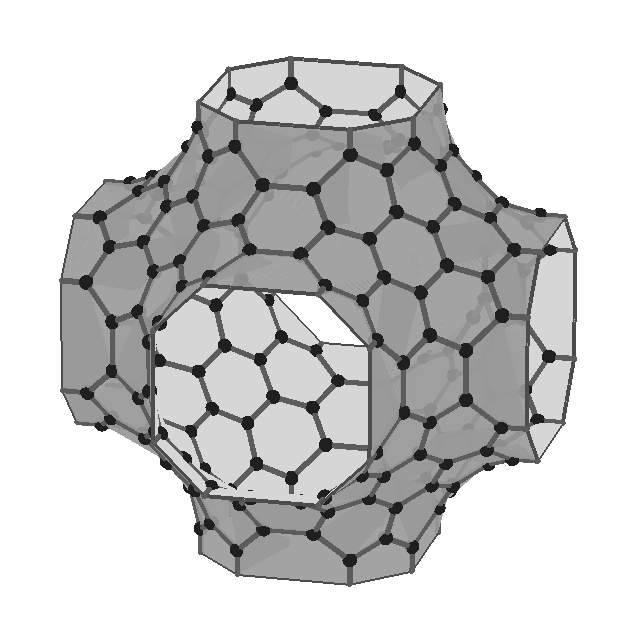}
    &
    \includegraphics[bb=0 0 640 640,scale=0.15]{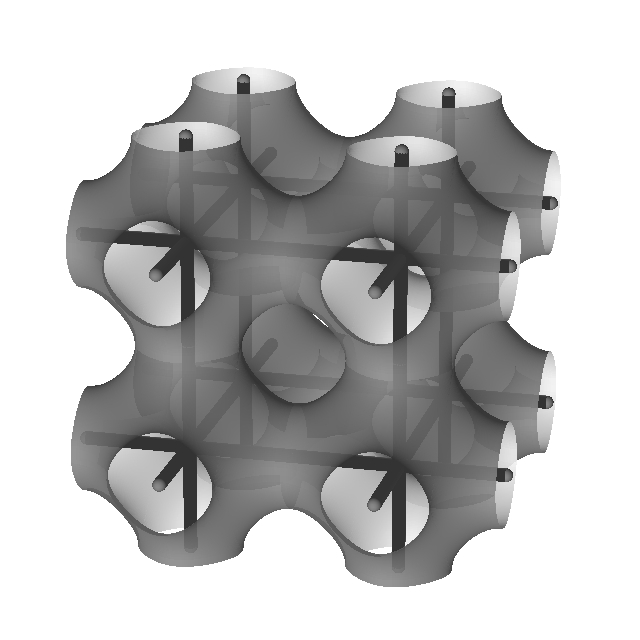}
  \end{tabular}
  \caption{
      (a) 
      Fundamental domain of Schwarz P-surface
      with respect to parallel transformations.
      The total surface is expanded by orthogonal parallel transformations.
      (b)
      Mackay-Terrones C192, 
      embedded on Schwarz P-surface.
      (c)
      Schwarz P-surface and \net{pcu} net.
  }
  \label{fig:schwartz-P}
\end{figure}

%%% Local Variables: 
%%% mode: japanese-latex
%%% TeX-master: t
%%% End: 
%%%%% END INCLUDE: Inputs/fig_schwartz-P.tex
%%%%%%%%%%%%%
\par
On the other hand, 
locally finite graphs are suitable mathematical objects
to consider as molecular/crystal structures.
We consider not only merely abstract graph structures
but also realizations of graphs.
As mentioned in the Introduction, 
since we will study surfaces 
which are spanned by molecular/crystal structures, 
we consider polyhedra constructed by realizations of graphs, 
and call them ``discrete surfaces''.
\begin{definition}[\!{\cite[Section 7.1]{Sunada-Book}}\,]
  \label{definition:realization}
  Let $X = (V, E)$ be a locally finite graph.
  A map $\Phi \colon X \longrightarrow \R^d$ is called a $d$-dimensional \emph{realization} of $X$, 
  identifying $X$ with a 1-dimensional cell complex.
\end{definition}
From this section, 
a realization $X$ of a locally finite graph
denotes a molecular/crystal structure, 
and we only consider $3$-dimensional ones.
Moreover, 
if the structure is periodic with period lattice $\Gamma$, 
we write $X/\Gamma$ as $X$
if there is no confusion.
Hence, we only consider only realizations of a finite graphs
to study molecular/crystal structures.
\par
Now we define the notion of ``discrete surfaces''.
Let $X$ be a realization of a finite graph.
Moreover, we assume each simple closed path of $X$ 
spans a surfaces which does not self-intersect.
Since such a realization is considered to be a polyhedron, 
we call $X$ a \emph{discrete surface}.
In particular, 
if the underlying finite graph of $X$ is of degree $3$, 
we call $X$ a \emph{discrete surface of degree $3$}.
\par
Now,
we consider C60, SWNTs and {\sptwo} schwarzites.
Expressing them by realizations of graphs, 
each underlying graph of $X$ is of degree $3$ (\emph{trivalent}), 
since we consider only {\sptwo} structures.
\par
Let $V(X)$, $E(X)$ and $F(X)$ be the numbers of vertices, edges and faces 
in a discrete surface $X$.
By considering $X$ as a CW complex, 
we may define the genus $g(X)$ of $X$ as $g(X) = \dim H_1(X, \Z)$.
Then, by Euler's theorem (Euler-Poincar\'e theorem), 
\begin{equation}
  \label{eq:euler}
  V(X) - E(X) + F(X) = 2 - 2g(X).
\end{equation}
For a smooth surface $M$, by the Gauss-Bonnet Theorem, 
$2\pi(2 - 2g) = K(M)$, where $K(M)$ is the total curvature of $M$.
Therefore, we define the \emph{total discrete curvature} $K(X)$ of $X$ by 
\begin{displaymath}
  K(X) = V(X) - E(X) + F(X).
\end{displaymath}
Here, we only consider the signature of $K(X)$, 
and we say that $X$ is totally negatively curved if and only if $K(X) < 0$.
Hence, we can call a carbon structure negatively curved
if the realization $X$ of the graph of the structure is negatively curved.
\begin{example}
  C60 is totally positively curved, 
  and the
  total discrete curvatures of SWNTs and graphene sheets are zero.
  Mackay-Terrones C192 is totally negatively curved (cf. Table \ref{table:euler}).
\end{example}
\begin{remark}
  \label{remark:euler}
  For a discrete surface $X$ of degree $3$, 
  we obtain
  \begin{equation}
    \label{eq:VEF}
    V(X) = \frac{k}{3}\sum_k N_k, 
    \quad
    E(X) = \frac{k}{2}\sum_k N_k, 
    \quad
    F(X) = \sum_k N_k, 
  \end{equation}
  where $N_k$ is the number of $k$-gon.
  By using Euler theorem, 
  we, therefore, obtain 
  \begin{equation}
    \label{eq:2:e}
    % \sum_k \left(\frac{k}{3} - \frac{k}{2} + 1\right) N_k = 2 - 2 g(X).
    \sum_k \left(1-\frac{k}{6}\right) N_k = 2 - 2 g(X).
  \end{equation}
  If $X$ satisfies $K(X) < 0$, 
  equality (\ref{eq:2:e}) implies that
  $X$ contains at least one $k$-gon ($k \ge 7$).
\end{remark}
%%%%%%%%%%%%%
%%%%% INCLUDE: Inputs/table_euler.tex
%%%%% file = ./Inputs/table_euler.tex
\begin{table}
  \centering
  \caption{
    C60 structure is a truncated icosahedron, 
    For graphene sheets and SWNTs, 
    we take fundamental domain with respect to 
    $\Z^2$- and $\Z$-action, respectively.
    For Mackay-Terrones C192, 
    we take the fundamental domain with respect to
    $\Z^3$-action.
    See Appendix for chiral index of SWNTs.
  }
  \label{table:euler}
  \begin{tabular}{l|r|r|r||r|r}
    & $V(X)$ & $E(X)$ & $F(X)$ & $K(X)$ & $g(X)$ 
    \cr
    \hline
    \hline
    C60 
    & $60$ & $90$ & $32$ & $2$ & $0$ 
    \cr
    \hline
    SWNT, chiral index = $(6, 0)$
    & $24$ & $36$ & $12$ & $0$ & $1$ 
    \cr
    \hline
    SWNT, chiral index = $(6, 3)$
    & $64$ & $126$ & $42$ & $0$ & $1$ 
    \cr
    \hline
    SWNT, chiral index = $(6, 6)$
    & $24$ & $36$ & $12$ & $0$ & $1$ 
    \cr
    \hline
    Mackay-Terrones C192
    & $192$ & $288$ & $102$ & $-4$ & $3$ 
    \cr
  \end{tabular}
\end{table}
%%% Local Variables: 
%%% mode: japanese-latex
%%% TeX-master: t
%%% End: 
%%%%% END INCLUDE: Inputs/table_euler.tex
%%%%%%%%%%%%%
%%%%%%%%%%%%%%%%%%%%%%%%%%%%%%%%%%%%%%%%%%
%%%%%%%%%%%%%%%%%%%%%%%%%%%%%%%%%%%%%%%%%%
%%%%%%%%%%%%%%%%%%%%%%%%%%%%%%%%%%%%%%%%%%
\section{Topological crystals and their standard realizations}
\label{sec:Standard}
Scientists use space groups (crystallographic groups) to describe structures of crystals.
A space group represents the symmetry of atoms in the crystal, 
but it does not describe bonds of atoms.
The notion of a crystal lattice or a topological crystal, which was introduced by Kotani-Sunada \cite{Kotani-Sunada}, 
represents placements of atoms and their bonds in a crystal.
The definition of topological crystals is as follows:
\begin{definition}[\!{\cite[Section 6.2]{Sunada-Book}}\,]
  A locally finite graph $X = (V, E)$ is called a \emph{topological crystal} or a \emph{crystal lattice} if and only if
  there exists a finite graph $X_0 = (V_0, E_0)$ and 
  a regular covering map $\pi \colon X \longrightarrow X_0$
  such that the covering transformation group $\Gamma$ of $\pi$ is abelian.
  Moreover, if the rank of $\Gamma \subset H_1(X_0, \Z)$ is $d$, 
  $X$ is called $d$-dimensional.
\end{definition}
A topological crystal $X = (V, E)$ is an abstract structure of a crystal, 
that is to say, each vertex of $X$ represents an atom of a crystal, 
and if $\vv{v}_1$ and $\vv{v}_2 \in V$ are connected by an edge $(\vv{v}_1, \vv{v}_2) \in E$, 
then atoms $\vv{v}_1$ and $\vv{v}_2$ are bonded.
However, the coordinates of vertices are not defined.
To determine, the coordinates of vertices, 
we define a realization of $X$.
\begin{definition}[\!{\cite[Section 7.1]{Sunada-Book}}\,]
  \label{definition:periodicrealization}
  Let $X = (V, E)$ be a $d$-dimensional topological crystal, 
  which is identified with a 1-dimensional cell complex.
  A map $\Phi \colon X \longrightarrow \R^d$ is called a \emph{realization} of $X$.
  Moreover, if there exists an injective homomorphism $\rho \colon \Gamma \longrightarrow \R^d$ such that
  $\Phi(\sigma \vv{v}) = \Phi(\vv{v}) + \rho(\sigma)$ for any $\vv{v} \in V$ and $\sigma \in L$, 
  and $\rho(\Gamma)$ is a lattice subgroup of $\R^d$, 
  then $\Phi$ is called a \emph{periodic} realization.
\end{definition}
Since physical crystals have periodic structures, 
it is natural to consider only periodic realizations.
Figure \ref{fig:hexagonal} shown examples of realizations of the same crystal lattice.
As show in Figure \ref{fig:hexagonal}, 
we would like to select good ones among all realizations of a given crystal lattice.
To do this, 
we define the energy functional with respect to realizations of a given crystal lattice.
\begin{definition}[\!{\cite[Section 7.3]{Sunada-Book}}\,]
  The \emph{energy} of a realization $\Phi$ of a crystal lattice $X$ is defined by
  \begin{displaymath}
    E(\Phi) 
    =
    \frac{1}{2} \sum_{(\vv{v}_1, \vv{v}_2) \in E_0} |\Phi(\vv{v}_1) - \Phi(\vv{v}_2)|^2, 
  \end{displaymath}
  where $X_0$ is the fundamental graph of $X$.
  Moreover,
  $\Phi$ is called a \emph{harmonic} realization
  if $\Phi$ is a critical point of $E$.
\end{definition}
%%%%%%%%%%%%%%
%%%%% INCLUDE: Inputs/fig_hexagonal.tex
%%%%% file = ./Inputs/fig_hexagonal.tex
\begin{figure}
  \centering
  \begin{tabular}{llll}
    (a) & (b) & (c) & (d)
    \cr
    \includegraphics[bb=0 0 65 65,scale=1.2,clip=true]{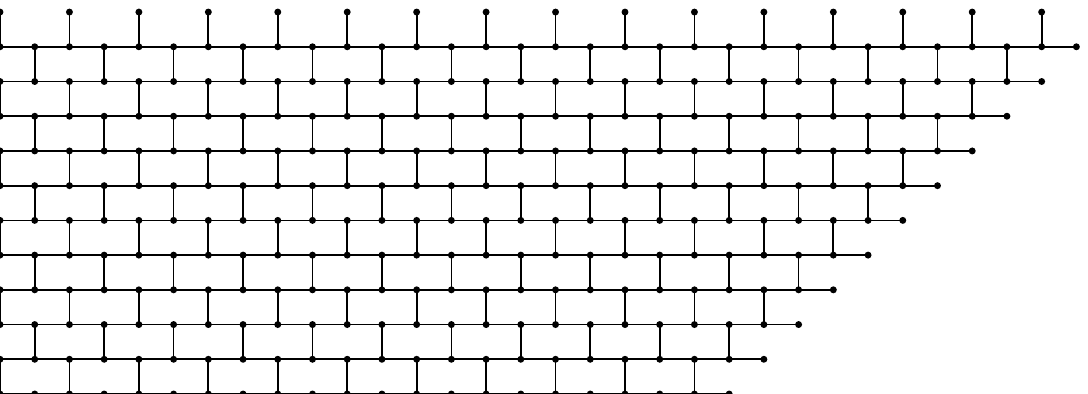}
    &
    \includegraphics[bb=0 0 65 65,scale=1.2,clip=true]{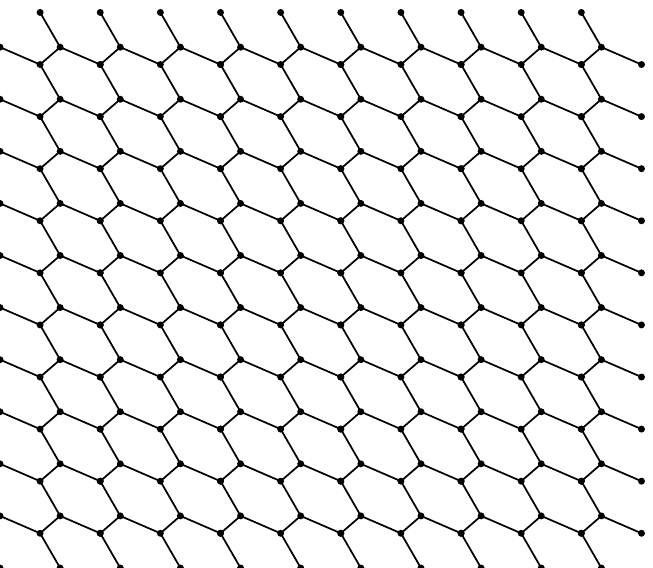}
    &
    \includegraphics[bb=0 0 65 65,scale=1.2,clip=true]{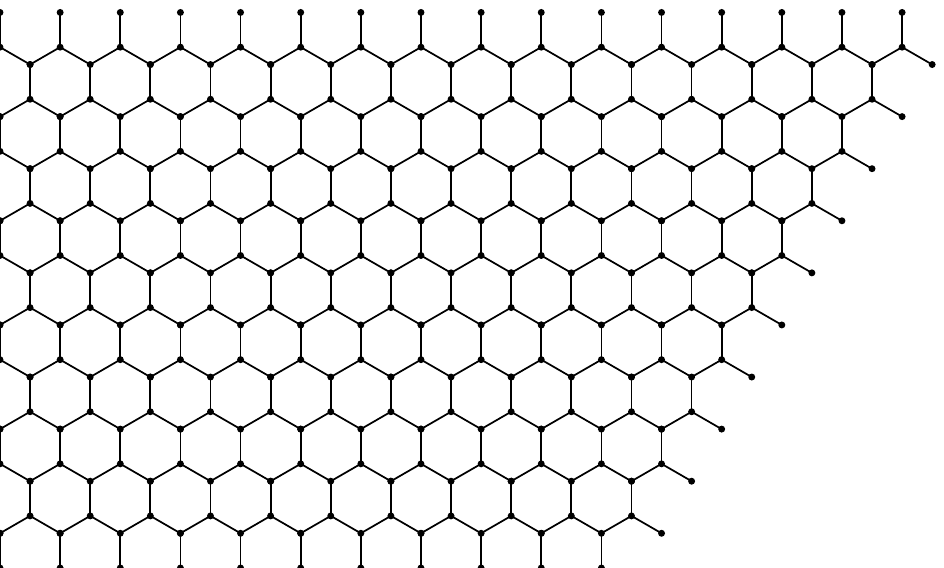}
    &
    \includegraphics[bb=0 0 65 65,scale=1.2]{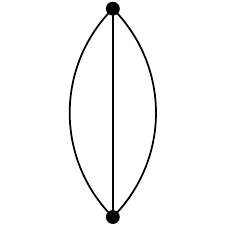}
  \end{tabular}
  \caption{
    Structures (a), (b) and (c) are periodic realizations of the same topological crystals (hexagonal lattice).
    (a) is not a harmonic realization, 
    (b) and (c) are harmonic.
    Moreover (c) is the standard realization of the hexagonal lattice.
    (d) is the fundamental graph of the hexagonal lattice.
  }
  \label{fig:hexagonal}
\end{figure}

%%% Local Variables: 
%%% mode: japanese-latex
%%% TeX-master: t
%%% End: 
%%%%% END INCLUDE: Inputs/fig_hexagonal.tex
%%%%%%%%%%%%%%
It is easy to show that 
$\Phi$ is harmonic if and only if
$\Phi$ satisfies
\begin{equation}
  \label{eq:3:0}
  \sum_{(\vv{v}, \vv{v}_1) \in E_0} (\Phi(\vv{v}) - \Phi(\vv{v}_1)) = 0, 
\end{equation}
for all $\vv{v} \in X_0$.
Note that
1) the left-hand side of (\ref{eq:3:0}) is the discrete Laplacian of $(\Phi(V_0), E_0)$;
2) $\Phi$ is harmonic if and only if 
all vertices of $\Phi(X)$ satisfy mechanical equilibrium;
3) structures in Figure \ref{fig:hexagonal}(b) and Figure \ref{fig:hexagonal}(c) are harmonic.
However, we consider the structure in Figure \ref{fig:hexagonal}(c)
as the most symmetric object among all realizations of $X$.
To select the most symmetric object by using the variational method, 
we also consider variations with respect to lattices $\Gamma$.
\begin{definition}[\!{\cite[Section 7.4]{Sunada-Book}}\,]
  Let $X$ be a $d$-dimensional crystal lattice 
  with the fundamental graph $X_0$.
  Moreover,
  let $\Gamma$ be the covering transformation group.
  A realization $\Phi \colon X \longrightarrow \R^d$ 
  is called \emph{standard}
  if and only if
  $\Phi$ is a critical point of $E$ with respect to 
  variations $\Phi$ and $\Gamma$ subject to $\Vol(\Gamma) = 1$.
\end{definition}
\begin{theorem}[\!{\cite{Kotani-Sunada}}\,]
  For any topological crystal $X$, 
  there exists a standard realization $\Phi$ of $X$.
  Moreover $\Phi$ is unique up to scaling and parallel translations.
\end{theorem}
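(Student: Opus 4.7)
The plan is to reduce to a finite-dimensional constrained variational problem and then exhibit the critical point via an explicit Albanese-type construction. Because any periodic realization $\Phi$ is determined by its values on the finite vertex set $V_0$ of the fundamental graph $X_0$ together with the injective lattice homomorphism $\rho \colon \Gamma \to \R^d$, the energy $E(\Phi)$ becomes a smooth quadratic polynomial in the coordinates of $\{\Phi(\vv{v})\}_{\vv{v} \in V_0}$ and of $\rho$. Fixing the centroid $\sum_{\vv{v} \in V_0} \Phi(\vv{v}) = 0$ removes the translational symmetry, and the volume condition $\Vol(\Gamma) = |\det \rho| = 1$ (with respect to a chosen $\Z$-basis of $\Gamma$) cuts out a smooth codimension-one hypersurface in this parameter space.

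For existence, I would minimize $E$ directly on this hypersurface. Coercivity arises from two mechanisms: if $\|\rho\|\to\infty$ while $|\det \rho|=1$, then some lattice direction must realize a long displacement on some edge of $X_0$, forcing $E\to\infty$; and if $\rho$ stays bounded but $\max_{\vv{v}}|\Phi(\vv{v})|\to\infty$, then the zero-centroid condition together with the connectedness of $X_0$ again produces an edge contribution that diverges. Hence a global minimizer $\Phi$ exists and is a critical point with respect to variations of both $\Phi$ and $\Gamma$.

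For a canonical model, which is also the cleanest route to uniqueness, I would appeal to the Albanese construction. Endow the $1$-chain space $C_1(X_0, \R)$ with the inner product in which edges form an orthonormal basis, and restrict it to the cycle subspace $H_1(X_0, \R)$. The covering $\pi$ induces a surjection $H_1(X_0, \Z) \to \Gamma$ of rank $d$, hence a linear surjection $\pi_{*}\colon H_1(X_0, \R) \to \Gamma \otimes \R \cong \R^d$. Orthogonal projection of path chains onto $\Gamma \otimes \R$ along $\ker \pi_{*}$ produces a canonical realization $\Phi_0$ which simultaneously satisfies the harmonicity equation (\ref{eq:3:0}) and the isotropy identity
\begin{displaymath}
  \sum_{(\vv{v}_1, \vv{v}_2) \in E_0} \bigl(\Phi_0(\vv{v}_1) - \Phi_0(\vv{v}_2)\bigr) \otimes \bigl(\Phi_0(\vv{v}_1) - \Phi_0(\vv{v}_2)\bigr) = c \, I_d,
\end{displaymath}
which are precisely the Euler-Lagrange equations arising from the $\Phi$- and $\rho$-variations subject to $|\det \rho|=1$. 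After rescaling, $\Phi_0$ is a standard realization.

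The main obstacle is uniqueness. I would handle it in two stages. First, for fixed admissible $\rho$, harmonicity is a linear condition whose solution is unique up to translation, because the discrete Laplacian of the connected graph $X_0$ has one-dimensional kernel. Second, I would show that the isotropy identity above determines $\rho$ up to a positive scalar (and, implicitly, an orthogonal rotation of $\R^d$ absorbed into the equivalence implicit in the statement). The latter reduces to diagonalising the period tensor and using the Lagrange-multiplier equation coming from $|\det \rho|=1$. Matching any critical configuration against the Albanese model through these two rigidities yields the asserted uniqueness up to scaling and parallel translation.
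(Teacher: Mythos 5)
The paper states this theorem without proof, quoting it from Kotani--Sunada, and your proposal is essentially a reconstruction of the argument in that source: reduction to a finite-dimensional constrained quadratic problem, a coercivity/minimization argument for existence, the Albanese (orthogonal-projection) model as the canonical critical point, and uniqueness from the one-dimensional kernel of the discrete Laplacian combined with the tight-frame identity produced by the lattice variation. The outline is correct, and the two steps you leave as assertions --- that the projection model satisfies the isotropy identity, and that this identity determines $\rho$ up to a conformal linear map --- both follow by writing the edge-displacement cochain as $L\circ P$, where $P$ is the orthogonal projection of $C_1(X_0,\R)$ onto the orthogonal complement of $\ker\pi_{*}$ inside $H_1(X_0,\R)$ and $L$ is linear, so that $\sum_{e}v(e)\otimes v(e)=LL^{\ast}$ is a multiple of the identity exactly when $L$ is conformal.
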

\begin{remark}
  In \cite[Theorem 4]{Delgado-Friedrichs-2}, 
  Delgado-Friedrichs considered equilibrium placements 
  $\Phi \colon X \to \R^d$ 
  of 
  $d$-periodic graphs $X$, 
  and proved the unique (up to affine translations) 
  existence of 
  the equilibrium placement for any periodic graph
  (See also \cite{Delgado-Friedrichs-1}).
  The notion of equilibrium placements 
  corresponds to the notion of harmonic realizations.
  Moreover, Delgado-Friedrichs also proved that, 
  for any $d$-periodic graph $X$, 
  there exists a unique equilibrium placement $\Phi$ such that, 
  for every $\gamma \in \Gamma \subset \Aut(X)$,
  an isometry $\gamma^* \colon \R^d \to \R^d$ associated to $\gamma$ with respect to $\Phi$ exists (\cite[Theorem 11]{Delgado-Friedrichs-2}).
  This equilibrium placement corresponds to the standard realization.
\end{remark}
\begin{example}
  For the hexagonal lattice (Figure \ref{fig:hexagonal}), 
  all lattice $\Gamma$ are parameterized by the angle $\theta$ between 
  period vectors.
  The energy of all harmonic realizations attains its minimum at 
  $\theta = \pi/3$, 
  which implies that
  the standard realization of the hexagonal lattice is
  Figure \ref{fig:hexagonal}(c).
\end{example}
\begin{example}
  There is a general method
  of calculating the standard realization of 
  a given topological crystal (cf.~\cite{Sunada-Book, Naito-AMS}).
  The diamond crystal is obtained by this procedure from 
  the fundamental graph in Figure \ref{fig:3d}(a).
  The \net{srs} network, which has a deep relationship with the Gyroid surface,  
  is also obtained by this method.
  Moreover, Sunada \cite{Sunada-Notice} shows that the only networks which satisfy the ``strongly isotropic property'' are diamond (\net{dia}) and \net{srs}.
  For the relation of the \net{srs} network and the Gyroid surface, 
  see \cite{Hyde-2, Hyde-3}.
  For the definition of the strongly isotropic property, 
  see \cite[p.212]{Sunada-Notice}, 
  and 
  for the explicit figure of $K_4$, 
  see \cite{Naito-AMS}, \cite[Figure 5]{Hyde-3}.
  Here, we note that 
  we show physical meta-stability of carbon $K_4$ structure in \cite{K4}.
\end{example}
%%%%%%%%%%%%%%%
%%%%% INCLUDE: Inputs/fig_3d.tex
%%%%% file = ./Inputs/fig_3d.tex
\begin{figure}
  \centering
  \sidecaption
  \begin{tabular}[b]{ll}
    (a)& (b)
    \cr
    \includegraphics[bb=0 0 65 65,scale=1.5]{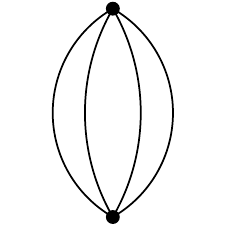}
    &
    \includegraphics[bb=0 0 65 65,scale=1.5]{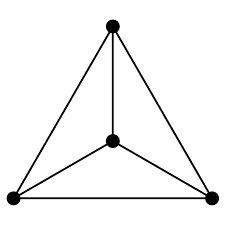}
  \end{tabular}
  \caption{
      (a) 
      The fundamental graph of the diamond lattice.
      The diamond lattice is obtained by 
      the standard realization of the maximal abelian covering of this graph.
      (b)
      The fundamental graph ($K_4$ graph) of the \net{srs} net (the $K_4$ lattice).
      The \net{srs} net is obtained by 
      the standard realization of the maximal abelian covering of $K_4$ graph.
  }
  \label{fig:3d}
\end{figure}

%%% Local Variables: 
%%% mode: japanese-latex
%%% TeX-master: t
%%% End: 
%%%%% END INCLUDE: Inputs/fig_3d.tex
%%%%%%%%%%%%%%%
The most important property of the standard realization is as follows:
\begin{theorem}[\!{\cite[Theorem 1]{Sunada-Notice}}\,]
  \label{claim:sunada}
  Let $X$ be a $d$-dimensional crystal lattice, 
  and $\Phi$ a realization of $X$.
  Then there exists a homomorphism $T \colon \Aut(X) \longrightarrow M(d)$
  such that $\Phi(g\vv{x}) = T(g) \Phi(\vv{x})$ for any 
  $\vv{x} \in X$ and $g \in \Aut(X)$, 
  where $\Aut(X)$ is the automorphism group of $X$ and
  $M(d)$ is the motion group of\/ $\R^d$.
\end{theorem}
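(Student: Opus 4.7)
The plan is to leverage the uniqueness of the standard realization stated in the immediately preceding Kotani--Sunada theorem: given any $g \in \Aut(X)$, I would produce a second standard realization by precomposing $\Phi$ with $g$, then apply uniqueness to extract a rigid motion $T(g)$ of $\R^d$ that implements the graph automorphism on the vertex coordinates. The statement as written clearly needs $\Phi$ to be the standard realization (no such $T(g)$ exists for a generic realization), so I will treat the theorem in that form.

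First, for each $g \in \Aut(X)$ I define $\Phi_g \colon X \longrightarrow \R^d$ by $\Phi_g(\vv{x}) := \Phi(g\vv{x})$. Since $g$ is a bijection of vertices that preserves adjacency, it permutes the edges of the fundamental graph $X_0$, so the edge-length sum defining the energy satisfies $E(\Phi_g) = E(\Phi)$. Similarly, pushing the covering transformation group $\Gamma$ forward by conjugation with $g$ produces a lattice of the same covolume, so the volume constraint $\Vol(\Gamma) = 1$ is respected. Together these facts should yield that $\Phi_g$ is again a critical point of $E$ under joint variations of vertex positions and lattices with fixed covolume, i.e., $\Phi_g$ is itself a standard realization of $X$.

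Now I apply the uniqueness clause of the Kotani--Sunada theorem: after the normalization $\Vol(\Gamma) = 1$, any two standard realizations of $X$ differ by an element of the Euclidean motion group $M(d) = O(d) \ltimes \R^d$. This produces a unique $T(g) \in M(d)$ such that $\Phi(g\vv{x}) = T(g)\Phi(\vv{x})$ for every $\vv{x} \in X$. To check that $g \mapsto T(g)$ is a homomorphism, I compute
\begin{displaymath}
  T(gh)\Phi(\vv{x}) = \Phi(gh\,\vv{x}) = T(g)\Phi(h\vv{x}) = T(g)T(h)\Phi(\vv{x}),
\end{displaymath}
and since $\Phi(X)$ affinely spans $\R^d$ (because $X$ is genuinely $d$-dimensional), $T(gh)$ is determined on all of $\R^d$, giving $T(gh) = T(g)T(h)$.

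The step I expect to be the main obstacle is verifying that $\Phi_g$ really is a standard realization. One must make precise the sense in which $g$ acts on the lattice and confirm that conjugation by $g$ preserves the volume-normalized variational problem. If $\Gamma$ is not a priori normal in $\Aut(X)$, one must either replace $\Gamma$ by $g^{-1}\Gamma g$ and argue that both the energy and the standard-realization condition are intrinsic invariants of the abstract crystal lattice (independent of the particular maximal abelian subgroup chosen as $\Gamma$), or restrict to the maximal abelian covering, for which $\Gamma$ is characteristic and hence automatically normal in $\Aut(X)$. Once this intrinsicality is pinned down, the remainder of the argument is a formal consequence of uniqueness.
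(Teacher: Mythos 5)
The paper does not prove this theorem at all: it is quoted verbatim from \cite[Theorem 1]{Sunada-Notice} and used as a black box, so there is no in-paper argument to compare yours against. That said, your route is exactly the standard one (and essentially the one in Kotani--Sunada and in Sunada's Notices article): precompose the standard realization with $g$, observe that the energy and the volume-normalized variational problem are invariant, invoke uniqueness to get a motion $T(g)$, and read off the homomorphism property from the fact that $\Phi(X)$ affinely spans $\R^d$. Two remarks. First, you are right that the hypothesis ``$\Phi$ a realization'' must be read as ``$\Phi$ the standard realization''; the statement is false for a generic realization, and you correctly also upgrade the paper's uniqueness clause (``up to scaling and parallel translations'') to the correct ``up to similarity, hence up to a motion once $\Vol(\Gamma)=1$ is imposed'' --- without orthogonal transformations in the uniqueness statement your argument would only produce translations and could not account for point symmetries. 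Second, the step you flag as the main obstacle is the one genuine gap: a general $g \in \Aut(X)$ need not normalize $\Gamma$, so $E(\Phi_g)$ is a priori computed over a different fundamental graph. The standard fix is the one you sketch: $g\Gamma g^{-1}$ is another lattice subgroup commensurable with $\Gamma$, and the standard realization is independent of the choice of finite-index abelian subgroup used to define the energy (this independence is a lemma in Kotani--Sunada). With that lemma supplied, your proof is complete; without it, the claim $E(\Phi_g)=E(\Phi)$ is only justified for $g$ normalizing $\Gamma$, which already covers the cubic point-group symmetries actually used in Section \ref{sec:Carbon}.
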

Theorem \ref{claim:sunada}
implies that
the standard realization is
the most symmetric among all realizations.
%
%%%%%%%%%%%%%%%%%%%%%%%%%%%%%%%%%%%%%%%%%%
%%%%%%%%%%%%%%%%%%%%%%%%%%%%%%%%%%%%%%%%%%
%%%%%%%%%%%%%%%%%%%%%%%%%%%%%%%%%%%%%%%%%%
\section{Construction of negatively curved carbon crystals}
\label{sec:Carbon}
Our main aim is to construct examples of physically stable {\sptwo} negatively curved carbon structures.
In particular, we construct such structures with 
the same symmetry as \net{pcu}.
For the 3-net \net{pcu}, see Delgado-Friedrichs et al.~\cite{D-Friedrichs} and Hyde et al.~\cite{Hyde-3}.
In the following, we abbreviate the symmetry as {\em cubic symmetry}.
In this section, we use the word {\em network}, 
which means a realization of graphs.
%%%%%%%%%%%%%%%%%%%%%%%%%%%%%%%%%%%%%%%%%%
\subsection{Construction of topological crystals}
\label{sec:Carbon:topology}
To construct network (graph) structures with cubic symmetry, 
first we construct them in the \emph{hexagonal region}, 
then extend to the fundamental region (unit cell) of crystals
(see Figure \ref{fig:schwartz-P}(a) and Figure \ref{fig:8-2-1}(a)).
Since the hexagonal region has reflective symmetry of order $3$, 
we only consider networks with such symmetry.
We call the fundamental region of such symmetry the \emph{kite-region}
(see Figure \ref{fig:8-2-1}(b)).
Using the orbifold notation, 
the kite-region is the $\ast2223$ orbifold, 
and the hexagonal region is the $\ast 222222$ orbifold in $\H^2$.
For the orbifold notation, see \cite{Conway-Huson, epinet}
%%%%%%%%%%%%%%
%%%%% INCLUDE: Inputs/fig_8-2-1.tex
%%%%% file = ./Inputs/fig_8-2-1.tex
\begin{figure}
  \centering
  \begin{tabular}[b]{lll}
    (a) & (b) & (c)
    \cr
    \includegraphics[height=100pt]{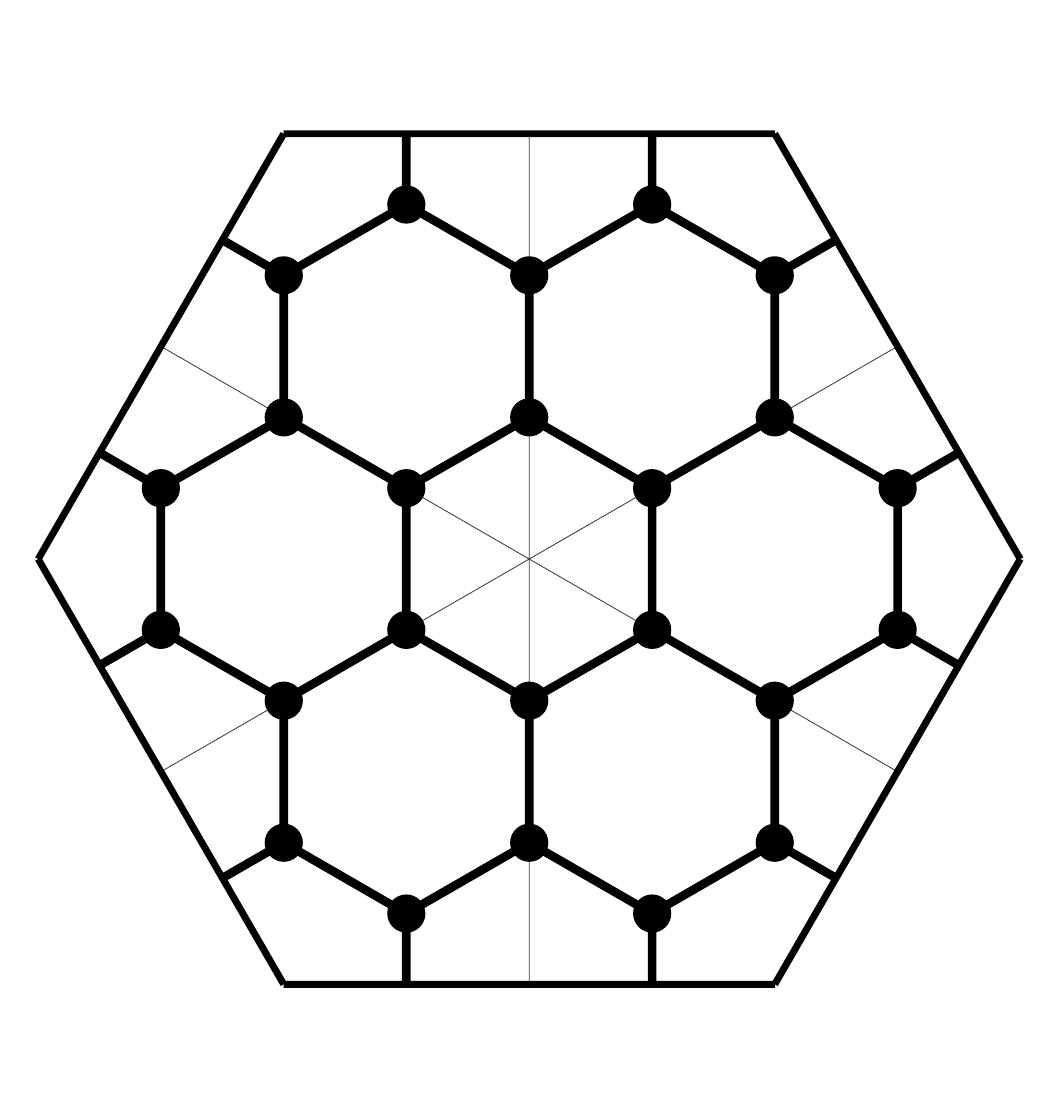}
    &
    \includegraphics[height=100pt]{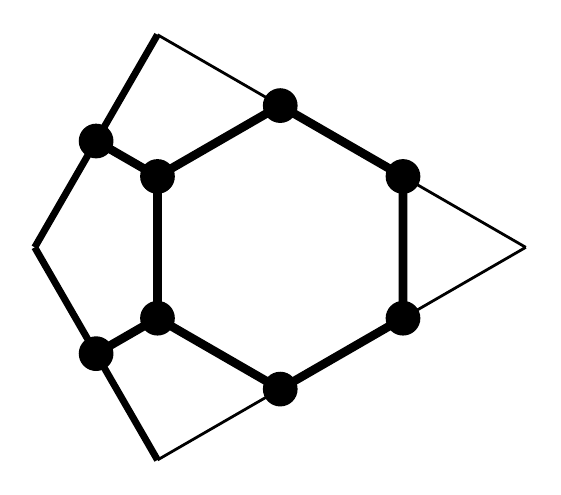}
    &
    \includegraphics[height=100pt]{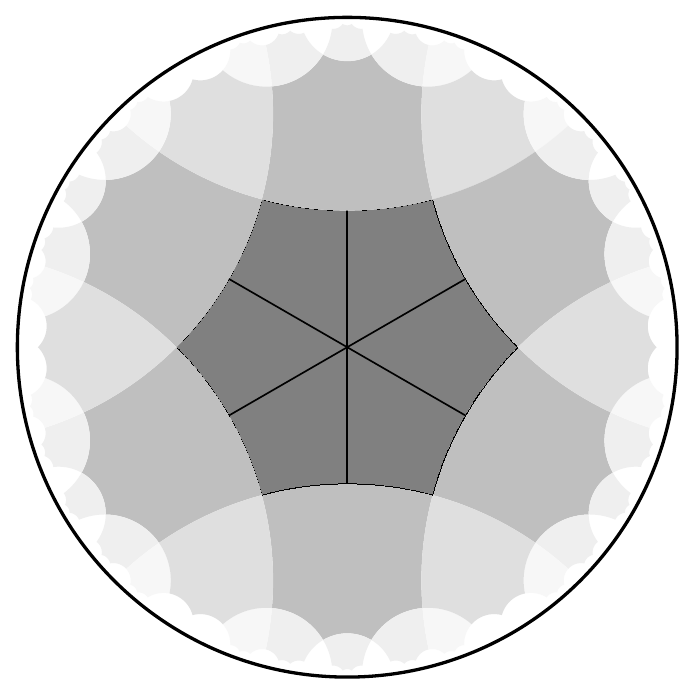}
    \cr
  \end{tabular}
  \caption{
    (a)
    The combinatorial structure of Mackay-Terrones C192
    in the hexagonal region $\ast222222$).
    Mackay-Terrones C192 is obtained by 
    reflecting this structure by group action of $\ast222222$.
    (b)
    The combinatorial structure of Mackay-Terrones C192
    in the kite-region ($\ast2223$).
    Structure (a) is obtained by 
    reflecting the structure in the kite-region with respect to 
    lines through the right vertex 
    (the vertex with $\pi/3$ symmetry).
    (c)
    An image of the hexagonal region
    in $\H^2$ (See \cite[Figure 7(c)]{Hyde-4}).
  }
  \label{fig:8-2-1}
\end{figure}

%%% Local Variables: 
%%% mode: japanese-latex
%%% TeX-master: t
%%% End: 
%%%%% END INCLUDE: Inputs/fig_8-2-1.tex
%%%%%%%%%%%%%%
\par
As the first step to construct such networks, 
we construct networks in the kite-region satisfying the following properties:
\begin{enumerate}
\item 
  \label{e:network:1}
  Any inner vertex is of degree 3, 
\item 
  \label{e:network:2}
  Any vertex on the boundary is joined with the two neighbouring vertices on the boundary, 
  or with an inner vertex and not with both neighbouring vertices on the boundary, 
\item 
  \label{e:network:3}
  A network is planar and connected, 
  and there are at least four vertices on the boundary, 
\item 
  \label{e:network:4}
  A network does not have a consecutive sequence of odd vertices on the boundary, 
\item 
  \label{e:network:5}
  A network is triangle-free.
\end{enumerate}
Condition \ref{e:network:1} corresponds to consideration of {\sptwo} crystals.
By conditions \ref{e:network:2}, \ref{e:network:3} and \ref{e:network:4}.
we may extend the network to the hexagonal region.
Condition \ref{e:network:5} implies avoiding $3$-rings in structures.
It is easy to prove that networks with the above conditions must have an even number of vertices.
Networks with $6$ and $8$ vertices 
in the kite-region
are classified in \cite[Figure 2]{Naito-Carbon}.
Since the kite-region is homeomorphic to the disk, 
we may construct such networks in the disk region.
\par
The second step is to extend networks in the kite-region to the hexagonal region by reflections.
In this step, 
if there exist degree $2$ vertices on reflection boundaries, 
we delete them.
By these procedures, 
we may construct a network in the hexagonal region.
\par
Finally, 
by extending the network in the hexagonal region to the fundamental region by parallel transformations, 
we obtain a required network.
We note that the above properties are not sufficient conditions
to obtain trivalent networks which reticulate
a Schwarz P-like surface.
Constructing networks in the hexagonal region, 
and patching them in a suitable manner, 
we may obtain networks on D- and/or G-like surface.
However, as the aim of this note is 
to construct examples of networks on a P-like surface, 
we consider the hexagonal region as 
a hexagonal face of the trucated octahedron, 
then we obtain a required trivalent network 
by extending in the above manner.
%%%%%%%%%%%%%%%%%%%%%%%%%%%%%%%%%%%%%%%%%%
%%%%%%%%%%%%%%%%%%%%%%%%%%%%%%%%%%%%%%%%%%
%%%%%%%%%%%%%%%%%%%%%%%%%%%%%%%%%%%%%%%%%%
\subsection{Construction of standard realizations}
\label{sec:Carbon:realization}
Let $X_0 = (V_0, E_0)$ be a network constructed as in Section \ref{sec:Carbon:topology}.
Since $X_0$ has 
cubic symmetry, 
we may extend $X_0$ to a $3$-dimensional toplogical crystal $X = (V, E)$.
Our purpose in this section is
to construct the standard realization of $X$, 
and to prove that the realization has cubic symmetry.
That is to say, 
translation vectors of the standard realization are orthogonal.
\par
First we let $\Phi \colon X \longrightarrow \R^3$ be a realization 
with $\Gamma = X/X_0$, 
and $\{\vv{e}_x, \vv{e}_y, \vv{e}_z\}$ be the basis of $\R^3$ with $\det(\vv{e}_x \vv{e}_y \vv{e}_z) = 1$.
By using this realization, 
we may find coordinates of vertices $\vv{x}_i = \Phi(\vv{v}_i) \in \R^3$ for $\vv{v}_i \in V$.
To calculate standard realizations, 
we should define the energy of realizations.
If $\vv{v}_i \in V_0$, then 
edges $(\vv{v}_i, \vv{v}_j)$ satisfy either
$\vv{v}_j \in V_0$, $\vv{v}_j \in V_0 + \vv{e}_\alpha$ or $\vv{v}_j \in V_0 - \vv{e}_\alpha$
for some $\alpha \in \{x, y, z\}$, 
hence, we write that $(\vv{v}_i, \vv{v}_j)$ is in either $E_0$, $E_{(0,+)}$ or $E_{(0,-)}$.
Under this notation, 
the energy of $\Phi$ may be written as
\begin{equation}
  \label{eq:4:1}
  \begin{aligned}
    E
    = 
    \frac{1}{2}
    &\left(
      \sum_{(\vv{v}_i, \vv{v}_j) \in E_0}
      |\vv{x}_i - \vv{x}_j|^2
    \right.
    \\
    &\left.
      +
      \sum_{(\vv{v}_i, \vv{v}_j) \in E_{(0,+)}}
      |\vv{x}_i + \vv{e}_\alpha - \vv{x}_j|^2
      +
      \sum_{(\vv{v}_i, \vv{v}_j) \in E_{(0,-)}}
      |\vv{x}_i - \vv{e}_\alpha - \vv{x}_j|^2
    \right).
  \end{aligned}
\end{equation}
Since the standard realization is the critical point of $E$ with variations with respect to variables $\{\vv{x}_i\}$ and $\{\vv{e}_\alpha\}$ 
subject to $\det(\vv{e}_x \vv{e}_y \vv{e}_z) = 1$, 
then to obtain the standard realization, 
using Lagrangian multiplier, 
we must solve the following equations:
\begin{equation}
  \label{eq:4:2}
  \frac{\partial E}{\partial \vv{x}_i} = 0,
\end{equation}
and 
\begin{equation}
  \label{eq:4:3}
  \frac{\partial}{\partial \vv{e}_\alpha}\left(E - \lambda \det(\vv{e})\right) = 0.
\end{equation}

\begin{proposition}[\!{\cite[Theorem 2 of Supplementary materials]{Naito-Carbon}}\,]
  The linear system {\upshape (\ref{eq:4:2})} is solvable.
\end{proposition}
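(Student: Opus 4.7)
The plan is to recognize equation (\ref{eq:4:2}) as an inhomogeneous graph-Laplacian equation on the fundamental graph $X_0$, and then to verify the Fredholm compatibility condition forced by self-adjointness of the Laplacian.

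Concretely, attach to each edge of $X_0$ a period offset $\vv{c}_{ij}\in\{\vv{0},\pm\vv{e}_x,\pm\vv{e}_y,\pm\vv{e}_z\}$, with the antisymmetry convention $\vv{c}_{ji}=-\vv{c}_{ij}$ (so a single edge type in $E_0$, $E_{(0,+)}$, $E_{(0,-)}$ is obtained by the sign of $\vv{c}_{ij}$). Differentiating (\ref{eq:4:1}) with respect to $\vv{x}_i$ yields
\begin{equation*}
  \frac{\partial E}{\partial \vv{x}_i}=\sum_{\vv{v}_j\sim\vv{v}_i}\bigl(\vv{x}_i-\vv{x}_j+\vv{c}_{ij}\bigr),
\end{equation*}
in which any self-loop of $X_0$ (an edge between two $\Gamma$-translates of the same vertex) contributes only a constant $|\vv{c}|^{2}$ to $E$ and so drops out. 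Setting this to zero reads
\begin{equation*}
  d_i\,\vv{x}_i-\sum_{\vv{v}_j\sim\vv{v}_i}\vv{x}_j=-\sum_{\vv{v}_j\sim\vv{v}_i}\vv{c}_{ij}=:\vv{b}_i,
\end{equation*}
so (\ref{eq:4:2}) is the system $L\vv{x}=\vv{b}$, where $L$ is the combinatorial Laplacian of $X_0$ acting componentwise on $\R^{3}$. Reading off each Cartesian coordinate reduces this to three uncoupled scalar systems $L\,x^{(\alpha)}=b^{(\alpha)}$ in $\R^{|V_0|}$ for $\alpha\in\{x,y,z\}$.

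The key step is the solvability of each scalar system. Since $X_0$ is connected, $L$ is symmetric positive semidefinite with $\ker L=\R\cdot\mathbf{1}$, so by the Fredholm alternative $b^{(\alpha)}\in\operatorname{Im}L$ if and only if $\sum_i b_i^{(\alpha)}=0$, equivalently $\sum_i\vv{b}_i=\vv{0}$. But this sum collapses over edges of $X_0$,
\begin{equation*}
  \sum_i\vv{b}_i=-\sum_i\sum_{\vv{v}_j\sim\vv{v}_i}\vv{c}_{ij}=-\sum_{\text{edges}}(\vv{c}_{ij}+\vv{c}_{ji})=\vv{0},
\end{equation*}
by antisymmetry. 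Hence (\ref{eq:4:2}) admits a solution, unique up to an overall translation of $\R^{3}$, which is the familiar gauge freedom of the realization.

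The only real difficulty is careful bookkeeping of edge orientations, multi-edges, and self-loops in $X_0$; beyond that the argument uses nothing about $X_0$ except connectedness and the fact that $\vv{c}_{ij}$ is intrinsic data on an \emph{unoriented} edge. In particular the cubic symmetry and the combinatorial conditions of Section \ref{sec:Carbon:topology} play no role here, so solvability of (\ref{eq:4:2}) holds uniformly across the whole family of topological crystals produced by the method of Section \ref{sec:Carbon:topology}; the nontrivial work in constructing the standard realization is therefore concentrated not in (\ref{eq:4:2}) but in the coupled nonlinear system (\ref{eq:4:3}) for the lattice basis $\{\vv{e}_\alpha\}$.
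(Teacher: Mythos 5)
Your proposal is correct and follows essentially the same route as the paper: rewrite the critical-point equations as a discrete Laplacian system $\Delta_{\M{G}}\vv{x}=\vv{b}$ on the connected fundamental graph, note that the kernel is spanned by the constant vector, and check that $\sum_i \vv{b}_i=\vv{0}$ so that $\vv{b}$ is orthogonal to the kernel. Your write-up is slightly more explicit than the paper's (you justify $\sum_i\vv{b}_i=\vv{0}$ via the antisymmetry of the edge offsets and invoke the Fredholm alternative by name, where the paper simply asserts the cancellation), but the argument is the same.
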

\begin{proof}
  The left hand side of equation (\ref{eq:4:2}) is
  \begin{equation}
    \label{eq:4:4}
    \begin{aligned}
      \frac{\partial E}{\partial \vv{x}_i}
      &=
      \vv{x}_{j_1} + \vv{x}_{j_2} + \vv{x}_{j_3} - 3 \vv{x}_j
      + \vv{b}_j
    \end{aligned}
  \end{equation}
  where $\{\vv{v}_{j_1}, \vv{v}_{j_2}, \vv{v}_{j_3}\}$ are vertices adjacent to $\vv{v}_j$, 
  and $\vv{b} = (\vv{b}_i)$ is defined by
  \begin{displaymath}
    \vv{b}_i
    =
    \left\{
      \begin{alignedat}{3}
        &+\vv{e}_\alpha &\quad 
        &\text{if }
        (\vv{v}_j, \vv{v}_{j\ell}) \in E_{(\alpha, +)}
        \text{ for some } \ell \in \{1, 2, 3\}, 
        \alpha \in \{x, y, z\}, \\
        &-\vv{e}_\alpha &\quad 
        &\text{if }
        (\vv{v}_j, \vv{v}_{j\ell}) \in E_{(\alpha, -)}
        \text{ for some } \ell \in \{1, 2, 3\}, 
        \alpha \in \{x, y, z\}, \\
        &0 &\quad &\text{otherwise}
        % E_{(0, +)} \cup E_{(0, -)} = \emptyset, 
        \\
      \end{alignedat}
    \right.
  \end{displaymath}
  That is to say, 
  the equation (\ref{eq:4:2}) is written as
  \begin{equation}
    \label{eq:4:6}
    \Delta_\M{G} \vv{x} = \vv{b}, 
  \end{equation}
  where $\Delta_\M{G} = \M{A} - 3\M{I}$ is the discrete Laplacian of $G = X/\Gamma$, 
  and $A$ is the adjacency matrix of $G$.
  Since the discrete Laplacian of a connected graph has only a $1$-dimensional kernel, 
  and the kernel is spanned by $(1, \ldots, 1)$.
  Since $\sum \vv{b}_i = 0$, we obtain 
  that 
  $\vv{b}$ is perpendicular to $\Ker \Delta_\M{G}$.
  % $b \not\in \Ker \Delta_\M{G}$, 
  hence, equation (\ref{eq:4:2}) is solvable.
  \qed
\end{proof}
\begin{theorem}[\!{\cite[Theorem 3 of Supplementary materials]{Naito-Carbon}}\,]
  \label{claim:standard:cubic}
  The gram matrix of the lattice $\{\vv{e}_x, \vv{e}_y, \vv{e}_z\}$ which gives the solution of {\upshape (\ref{eq:4:2})} and {\upshape (\ref{eq:4:3})}
  is the identity matrix.
\end{theorem}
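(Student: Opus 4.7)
The plan is to invoke Theorem~\ref{claim:sunada} applied to the cubic graph automorphisms built into the construction of Section~\ref{sec:Carbon:topology}. The topological crystal $X$ is cubic by design: starting from a network in the kite-region ($\ast 2223$), extending by reflections to the hexagonal region ($\ast 222222$), and placing the result on the hexagonal faces of the truncated octahedron, the full cubic point group $O_h$ acts on the fundamental graph $X_0$ by graph automorphisms, and this action lifts to $X$ since it preserves the translation subgroup $\Gamma$. In particular there exist $g_c, g_r \in \Aut(X)$ whose conjugation actions on $\Gamma$ cyclically permute the three translation generators $\sigma_x, \sigma_y, \sigma_z$ (a $3$-fold rotation about the body diagonal) and send $\sigma_x$ to $\sigma_x^{-1}$ while fixing $\sigma_y, \sigma_z$ (a reflection in the $yz$-plane), respectively.

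Next I would apply Theorem~\ref{claim:sunada} to the standard realization $\Phi$: there is a homomorphism $T \colon \Aut(X) \to M(3)$ with $\Phi(g\vv{v}) = T(g)\Phi(\vv{v})$. On the translation subgroup, $T(\sigma_\alpha)$ is translation by $\vv{e}_\alpha$. Applying $T$ to the conjugation relations $g \sigma_\alpha g^{-1} = \sigma_{\beta}^{\pm 1}$ forces the linear part of $T(g)$ --- which is orthogonal since $T(g) \in M(3)$ --- to act on $\{\vv{e}_x, \vv{e}_y, \vv{e}_z\}$ by exactly the same permutation and sign change that $g$ induces on $\Gamma$.

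The gram matrix $G_{\alpha\beta} = \inner{\vv{e}_\alpha}{\vv{e}_\beta}$ is therefore invariant under the cubic action on the indices. Cyclic permutation invariance from $T(g_c)$ forces the three diagonal entries to coincide and the three off-diagonal entries to coincide, while the sign change from $T(g_r)$ gives $\inner{\vv{e}_x}{\vv{e}_y} = -\inner{\vv{e}_x}{\vv{e}_y}$, killing the off-diagonals. Hence $G = c\,I$ for some $c > 0$, and the volume constraint $\det(\vv{e}_x\,\vv{e}_y\,\vv{e}_z) = 1$ gives $\det G = c^3 = 1$, whence $c = 1$.

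The main obstacle is the first step --- verifying that the fundamental graph $X_0$ produced by the kite/hexagonal/truncated-octahedron procedure really does inherit the full cubic point-group action as graph automorphisms, and that this lifts to $X$ with the claimed conjugation action on $\Gamma$. This amounts to combinatorial bookkeeping on top of the reflective identifications of Section~\ref{sec:Carbon:topology}; once these automorphisms are in hand, the conclusion follows purely from Theorem~\ref{claim:sunada} together with the unit-volume constraint, and the variational equations (\ref{eq:4:2}) and (\ref{eq:4:3}) need not be engaged directly.
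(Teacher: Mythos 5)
Your argument is correct, but it is not the paper's proof: it is essentially the alternative route that the paper itself sketches in the Remark immediately following Theorem~\ref{claim:standard:cubic}. The paper's own proof never invokes Theorem~\ref{claim:sunada}; it works directly with the variational equations. Concretely, it takes the normalized solution $\{\vv{x}_i\}$ of (\ref{eq:4:2}) with $\sum \vv{x}_i = 0$, uses invariance of the energy under the reflections $t_\alpha$ (together with uniqueness of the normalized solution) to deduce symmetry relations on the $\vv{x}_i$ and on the edge sets $E_{(\alpha,\pm)}$, computes $\partial E/\partial\vv{e}_\alpha = K_\alpha\vv{e}_\alpha$ with $K_x=K_y=K_z=K\neq 0$, and then feeds this into the Lagrange condition (\ref{eq:4:3}), where $\partial\det(\vv{e})/\partial\vv{e}_\alpha = \vv{e}_\beta\times\vv{e}_\gamma$, to conclude $K\inner{\vv{e}_\alpha}{\vv{e}_\beta} = \lambda\delta_{\alpha\beta}$. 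Both arguments ultimately rest on the same combinatorial input --- the octahedral action on the fundamental graph, which holds by construction and which you correctly flag as the point needing verification; the paper's proof needs it too, since the invariance of $E$ under $T$ and the claim that the swap of $\vv{e}_x$ and $\vv{e}_y$ is a symmetry are exactly statements about graph automorphisms. Beyond that, your route buys conceptual economy at the price of invoking the deeper maximal-symmetry theorem as a black box, while the paper's computation is self-contained, stays at the level of the explicit Euler--Lagrange system, and exhibits the Lagrange multiplier. One small point in your favour: you close the gap from ``proportional to the identity'' to ``equal to the identity'' via $\det G = (\det(\vv{e}_x\,\vv{e}_y\,\vv{e}_z))^2 = 1$, a normalization step the paper's displayed proof leaves implicit.
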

\begin{proof}
  Let $\{\vv{x}_i\}$ be the solution of (\ref{eq:4:2}) with 
  $\sum \vv{x}_i = 0$, 
  and $t_\alpha$ be the reflection with respect to the plane with normal vector $\vv{e}_\alpha$.
  Moreover,
  let $T$ be the group generated by $\{t_x, t_y, t_z\}$.
  Since the energy $E$ is invariant under the action of $T$, 
  the solution $x$ satisfies:
  \begin{enumerate}
  \item 
    \label{e:L:1}
    If $T_x(\vv{v}_i) = \vv{v}_j$, then 
    $\vv{x}_i = K_{i,x}\vv{e}_x + K_{i,y}\vv{e}_y + K_{i,z}\vv{e}_z$
    and 
    $\vv{x}_j = -K_{i,x}\vv{e}_x + K_{i,y}\vv{e}_y + K_{i,z}\vv{e}_z$.
  \item 
    \label{e:L:2}
    $(\vv{v}_i, \vv{v}_j) \in E_{(\alpha, +)}$ if and only if 
    $(\vv{v}_j, \vv{v}_i) \in E_{(\alpha, -)}$.
  \end{enumerate}
  Therefore we obtain
  \begin{displaymath}
    \frac{\partial E}{\partial \vv{e}_\alpha}
    =
    \sum_{(\vv{v}_i, \vv{v}_j)\in E_{(\alpha, +)}} (\vv{x}_j + \vv{e}_\alpha - \vv{x}_i)
    -
    \sum_{(\vv{v}_i, \vv{v}_j)\in E_{(\alpha, -)}} (\vv{x}_j - \vv{e}_\alpha - \vv{x}_i)
    =
    K_\alpha \vv{e}_\alpha.
  \end{displaymath}
  Since the action exchanging $\vv{e}_x$ and $\vv{e}_y$ belongs to $T$, 
  we may obtain $K := K_x = K_y = K_z$ and $K \not= 0$.
  By \ref{e:L:2}, we obtain $K\vv{e}_\alpha = \lambda \vv{e}_\beta \times \vv{e}_\gamma$ 
  with $\epsilon_{\alpha\beta\gamma} = \epsilon_{xyz}$.
  Hence, we obtain that $K\inner{\vv{e}_\alpha}{\vv{e}_\beta} = \lambda \delta_{\alpha\beta}$.
  \qed
\end{proof}

\begin{remark}
  Theorem \ref{claim:standard:cubic} is also obtained from Theorem \ref{claim:sunada}.
  By Theorem \ref{claim:sunada}, 
  the standard realization has maximal symmetry, 
  and the symmetry of the realized crystal must be same as
  that of the topological crystal.
  Since the topological crystal, which we are considering, 
  is invariant under triply periodicity and 
  cubic
  group action, 
  therefore, 
  such actions extend to actions $M(d)$, 
  Hence, the gram matrix of $\{\vv{e}_\alpha\}$ should be proportional to the identity matrix.
\end{remark}
%%%%%%%%%%%%%%%%%%%%%%%%%%%%%%%%%%%%%%%%%%
%%%%%%%%%%%%%%%%%%%%%%%%%%%%%%%%%%%%%%%%%%
%%%%%%%%%%%%%%%%%%%%%%%%%%%%%%%%%%%%%%%%%%
\subsection{Construction of stable configurations}
By using Theorem \ref{claim:standard:cubic}, 
we obtain candidates for negatively curved cubic crystals.
However, distances of neighbouring atoms in such structures are not almost the same.
By physico-chemical considerations, 
the distances should be almost the same for physically stable configurations.
To find stable configurations with respect to binding energies, 
we perturb coordinates of atoms, 
then find stable ones, 
which we call \emph{relaxed configuration}, 
by using first principle calculations.
For this purpose, 
setting the standard realization as the initial configuration, 
perturbing positions of atoms, 
and 
calculating binding energies of carbon structures, 
we obtain stable configurations.
By using the above method, 
we search for physically stable configuration, 
and 
we obtain negatively curved cubic {\sptwo} carbon crystals as examples of schwarzites which reticulate a Schwarz P-like surface.
\begin{result}
  We obtain 
  four physically
  stable structures:
  \name{6-1-1-P} {\upshape (C176)}, 
  \name{6-1-2-P} {\upshape (C152)}, 
  \name{6-1-3-P} {\upshape (C152)}
  and 
  \name{8-4-2-P} {\upshape (C168)}.
  Relaxed configurations of them are
  physically stable.
  Moreover,
  \name{6-1-1-P} {\upshape (C176)}, 
  \name{6-1-2-P} {\upshape (C152)}, 
  \name{6-1-3-P} {\upshape (C152)}
  are metal and 
  \name{8-4-2-P} {\upshape (C168)} is a semi-conductor.
\end{result}
Our structures are illustrated in Figure \ref{fig:result}, 
and Table \ref{table:polygons} lists their basic properties.
For their energy bands and phonon spectrum of them, 
see \cite[Figures 4-8]{Naito-Carbon}.
\begin{remark}
  To calculate relaxed configuration, 
  the only information we use is the coordinates of carbon atoms.
  In Figure \ref{fig:result}(d), 
  we draw edges between an atom and the nearest three atoms.
  As a result, 
  each relaxed configuration has the same topological structure 
  as the original standard realization.
\end{remark}
Hence, our method to obtain cubic schwarzites is effective.
\begin{remark}
  In \cite{Hyde-4}, Ramsden-Robins-Hyde discuss 
  3-dimensional Euclidean nets (3-net).
  Our examples are also 3-nets, and 
  there are many examples which reticulate such surfaces.
  However, we construct, in particular, physically stable
  {\sptwo} (trivalent) examples.
\end{remark}
\begin{remark}
  We note that the structure \name{6-1-2-P} (C152) is 
  almost the same as with C152 in Park et al.~\cite{Park}.
\end{remark}
%%%%%%%%%%%%
%%%%% INCLUDE: Inputs/table_polygons.tex
%%%%% file = ./Inputs/table_polygons.tex
\begin{table}
  \centering
  \caption{
    Number of $k$-gon $N_k$ of our structures.
    Each structure lies on a negatively curved surface, 
    and 
    has $g = 3$.
    Hence, all of these structures are negatively curved.
    Obviously, each structure satisfies (\ref{eq:2:e}) for $g = 3$.
  }
  \label{table:polygons}
  \begin{tabular}{l|r|r|r|r||r|r|r|r|r}
    schwarzites
    & $N_5$ & $N_6$ & $N_7$ & $N_8$ & 
    \multicolumn{1}{|c}{$V(X)$} & 
    \multicolumn{1}{|c}{$E(X)$} & 
    \multicolumn{1}{|c}{$F(X)$} & 
    \multicolumn{1}{|c}{$K(X)$} & 
    \multicolumn{1}{|c}{$g(X)$} 
    \cr
    \hline
    \hline
    \name{6-1-1-P} (C176)
    & $0$ & $60$ & $24$ & $0$ & 
    $176$ & $264$ & $84$ & $-4$ & $3$
    \cr
    \hline
    \name{6-1-2-P} (C152)
    & $0$ & $60$ & $24$ & $0$ & 
    $152$ & $240$ & $84$ & $-4$ & $3$
    \cr
    \hline
    \name{6-1-3-P} (C152)
    & $24$ & $12$ & $24$ & $12$ & 
    $152$ & $228$ & $72$ & $-4$ & $3$
    \cr
    \hline
    \name{8-4-2-P} (C168)
    & $0$ & $68$ & $0$ & $12$ & 
    $168$ & $252$ & $80$ & $-4$ & $3$
    \cr
    \hline
    Mackay-Terrones C192
    & $0$ & $90$ & $0$ & $12$ & 
    $192$ & $288$ & $102$ & $-4$ & $3$
    \cr
  \end{tabular}
\end{table}
%%%%% END INCLUDE: Inputs/table_polygons.tex
%%%%% INCLUDE: Inputs/fig_circle.tex
%%%%% file = ./Inputs/fig_circle.tex
\begin{figure}
  \centering
  \begin{tabular}{lll}
    (a) \name{6-1}
    & 
    (b) \name{8-2}
    & 
    (c) \name{8-4}
    \cr
    \includegraphics[scale=0.65]{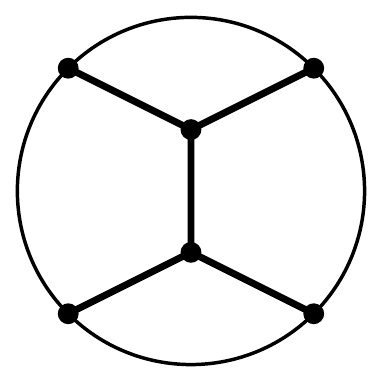}
    &
    \includegraphics[scale=0.65]{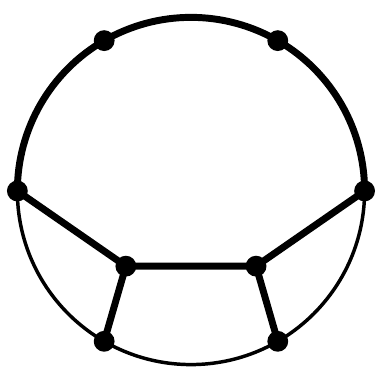}
    &
    \includegraphics[scale=0.65]{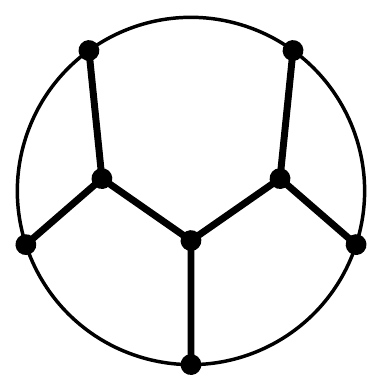}
    \cr
  \end{tabular}
  \caption{
    Networks in circle region.
    See \cite[Figure 2]{Naito-Carbon}
    for other networks with $6$- and $8$-vertices.
    Mackay-Terrones C192 can be constructed from \name{8-2}.
    }
  \label{fig:circle}
\end{figure}

%%% Local Variables: 
%%% mode: japanese-latex
%%% TeX-master: t
%%% End: 
%%%%% END INCLUDE: Inputs/fig_circle.tex
%%%%% INCLUDE: Inputs/fig_result.tex
%%%%% file = ./Inputs/fig_result.tex
\begin{figure}
  \centering
  \begin{tabular}{c|c|c|c|c}
    &\name{6-1-1-P}
    &\name{6-1-2-P}
    &\name{6-1-3-P}
    &\name{8-4-2-P}
    \cr
    \hline
    \raisebox{30pt}{(a)}
    &
    \includegraphics[bb=0 0 161 142,scale=\scalekite]{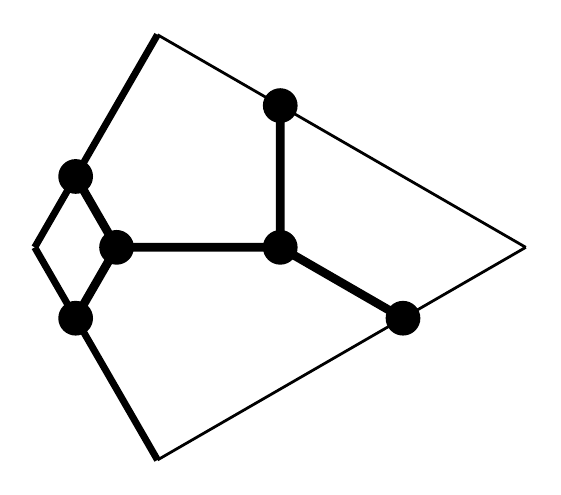}
    &
    \includegraphics[bb=0 0 161 142,scale=\scalekite]{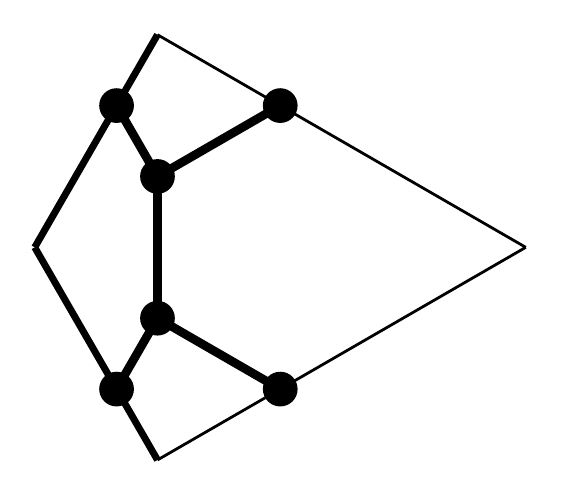}
    &
    \includegraphics[bb=0 0 161 142,scale=\scalekite]{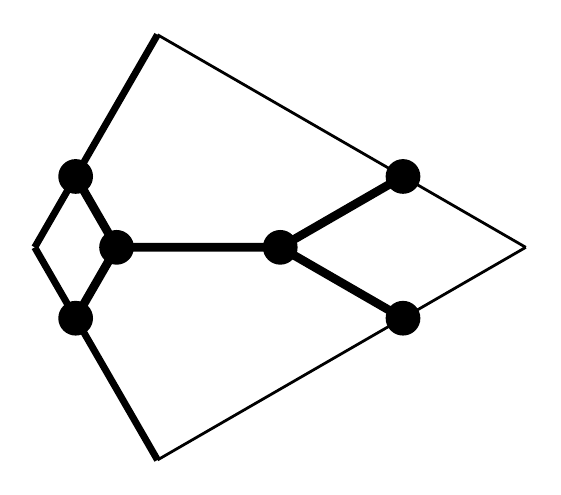}
    &
    \includegraphics[bb=0 0 161 142,scale=\scalekite]{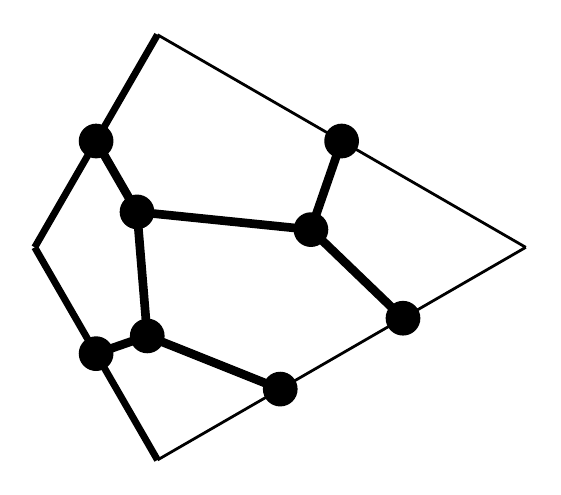}
    \cr
    \hline
    \vspace{0.01pt} & & & 
    \cr
    \raisebox{30pt}{(b)}
    &
    \includegraphics[bb=0 0 305 322,scale=\scalehex]{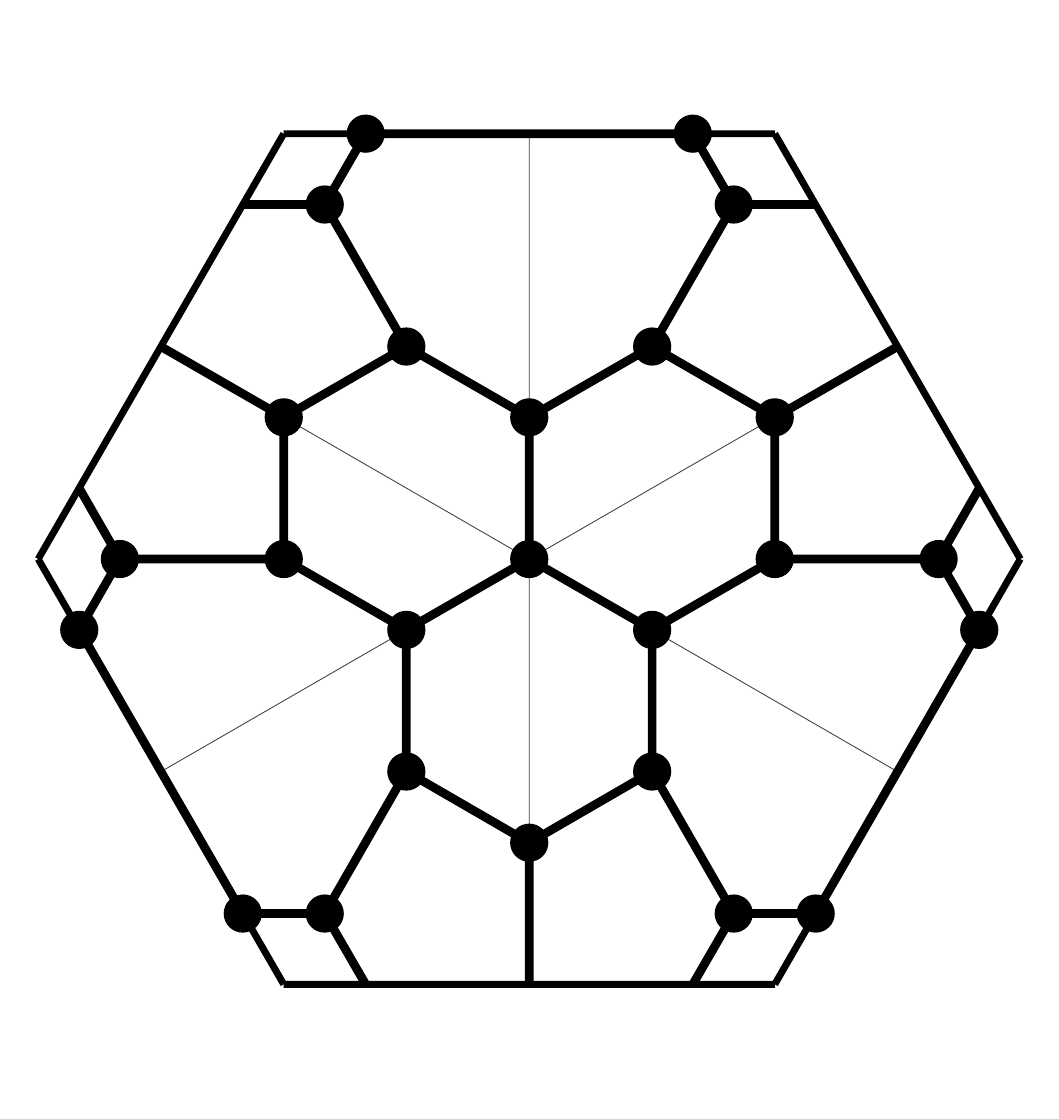}
    &
    \includegraphics[bb=0 0 305 322,scale=\scalehex]{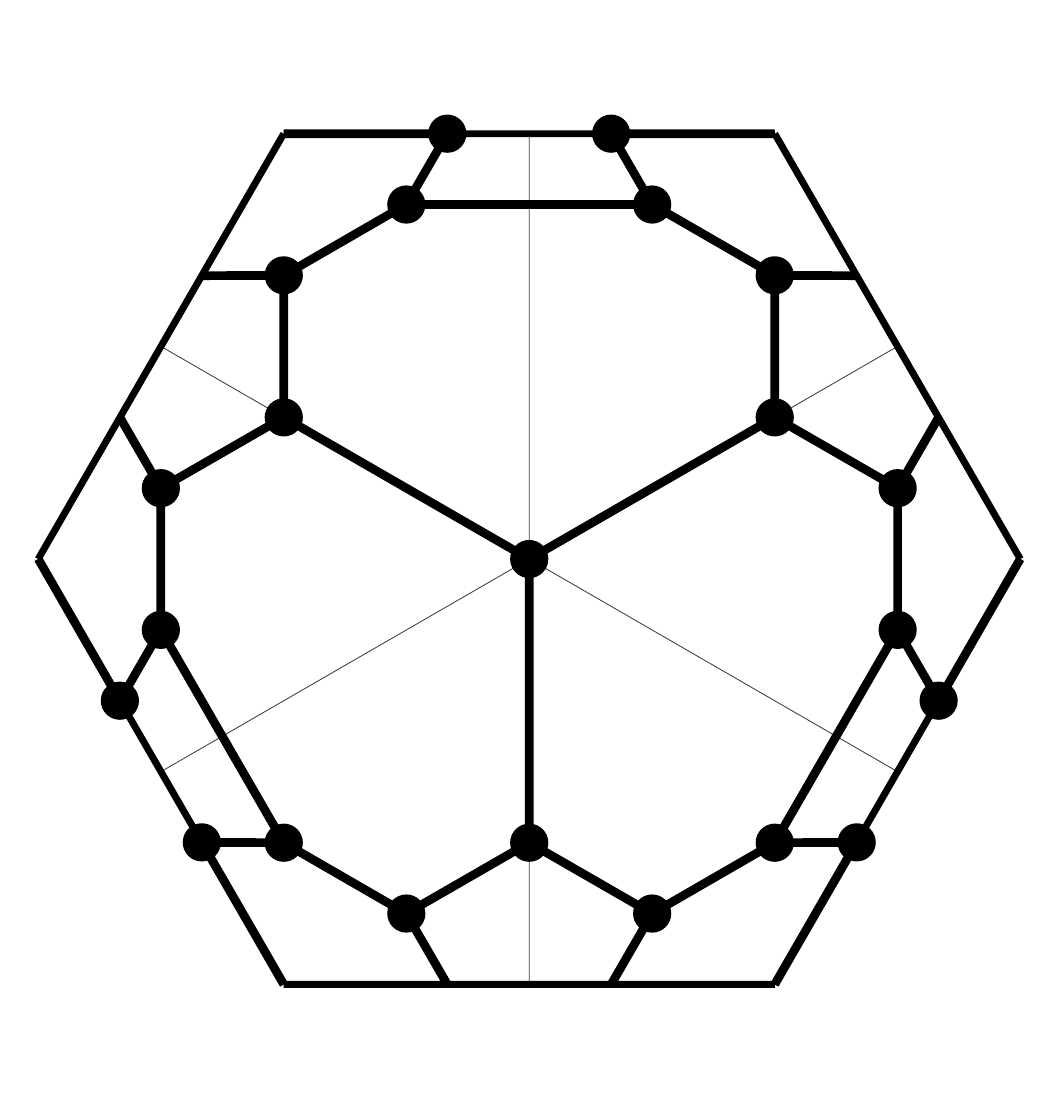}
    &
    \includegraphics[bb=0 0 305 322,scale=\scalehex]{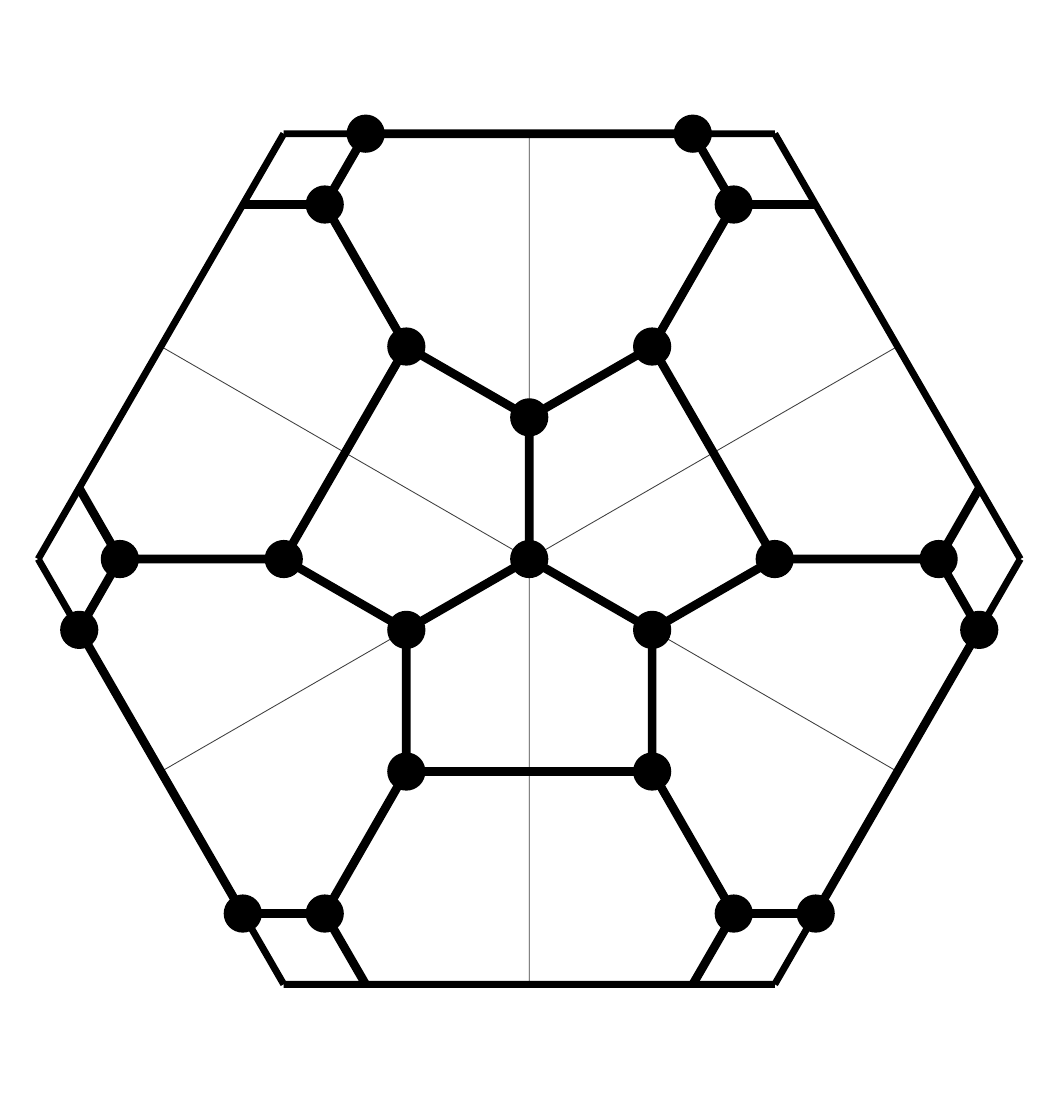}
    &
    \includegraphics[bb=0 0 305 322,scale=\scalehex]{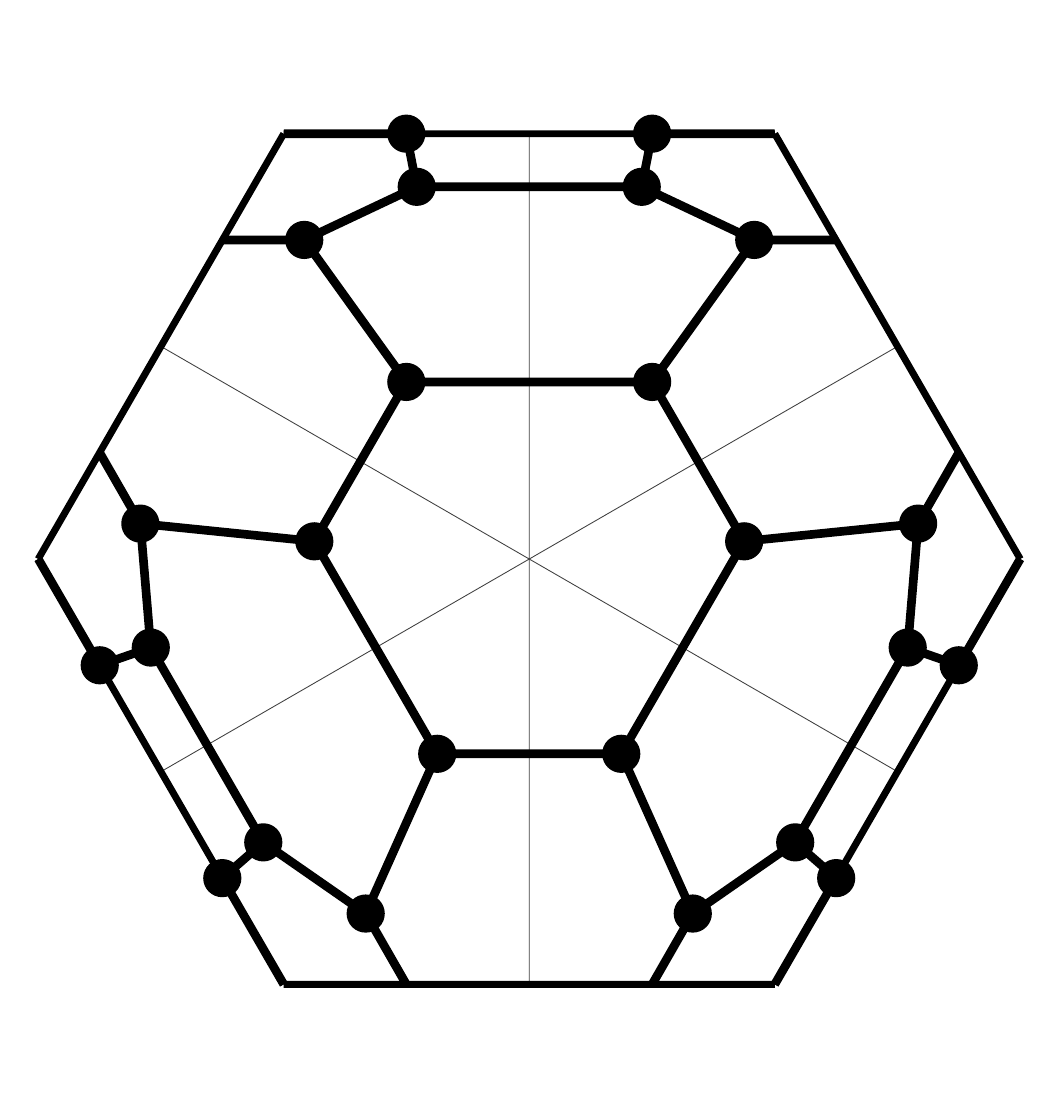}
    \cr
    \hline
    \vspace{0.01pt} & & & 
    \cr
    (c)
    &
    \includegraphics[bb=0 0 640 640,scale=\scalebase]{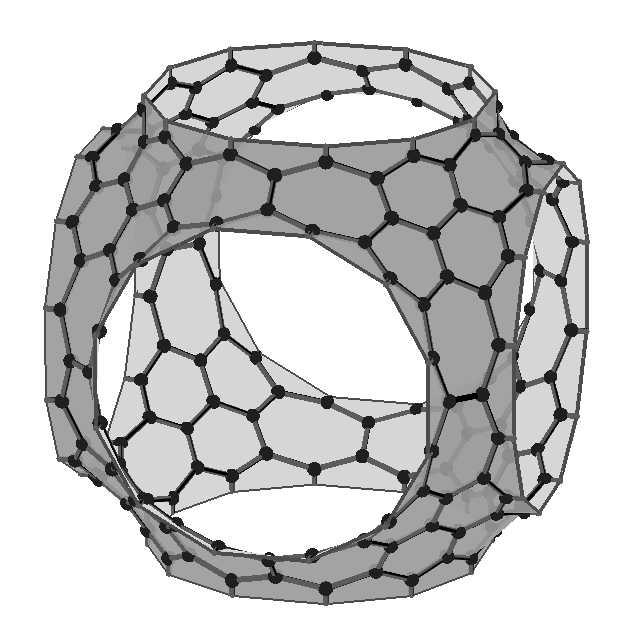}
    &
    \includegraphics[bb=0 0 640 640,scale=\scalebase]{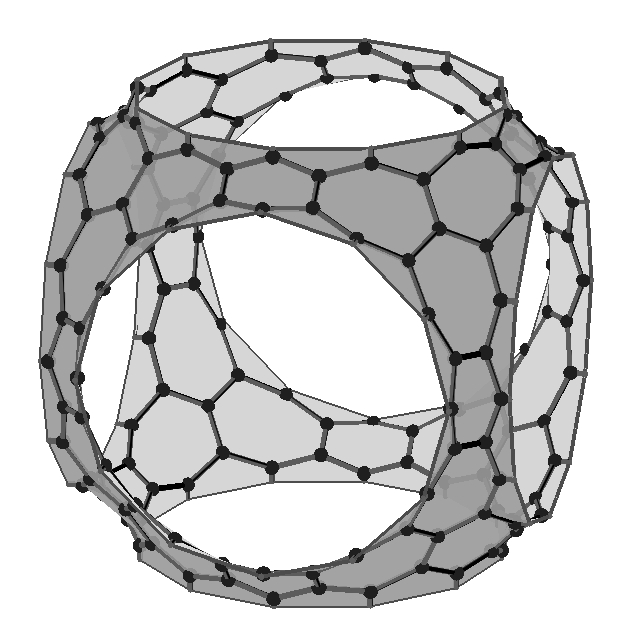}
    &
    \includegraphics[bb=0 0 640 640,scale=\scalebase]{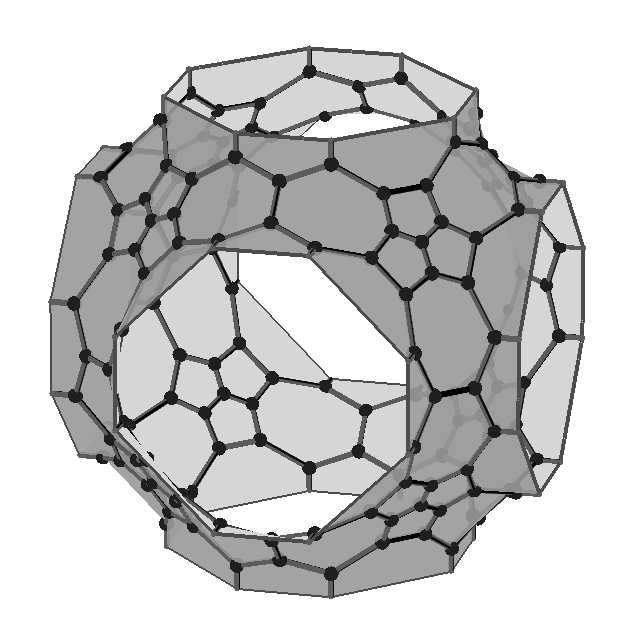}
    &
    \includegraphics[bb=0 0 640 640,scale=\scalebase]{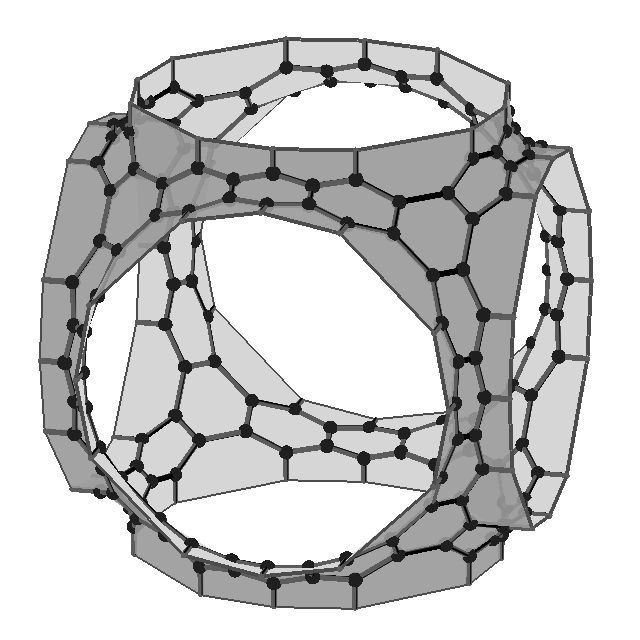}
    \cr
    &
    \includegraphics[bb=0 0 640 640,scale=\scalebase]{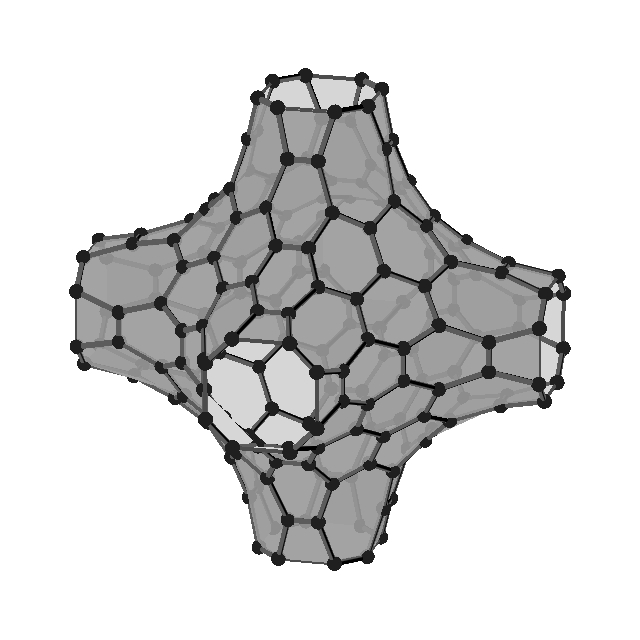}
    &
    \includegraphics[bb=0 0 640 640,scale=\scalebase]{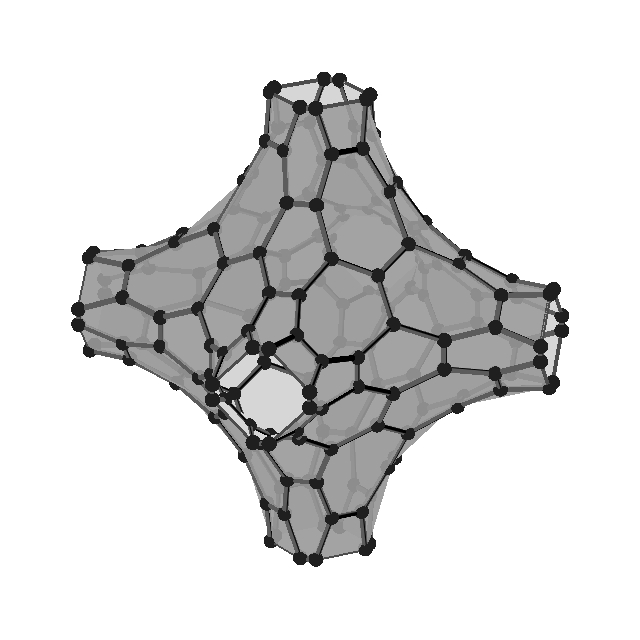}
    &
    \includegraphics[bb=0 0 640 640,scale=\scalebase]{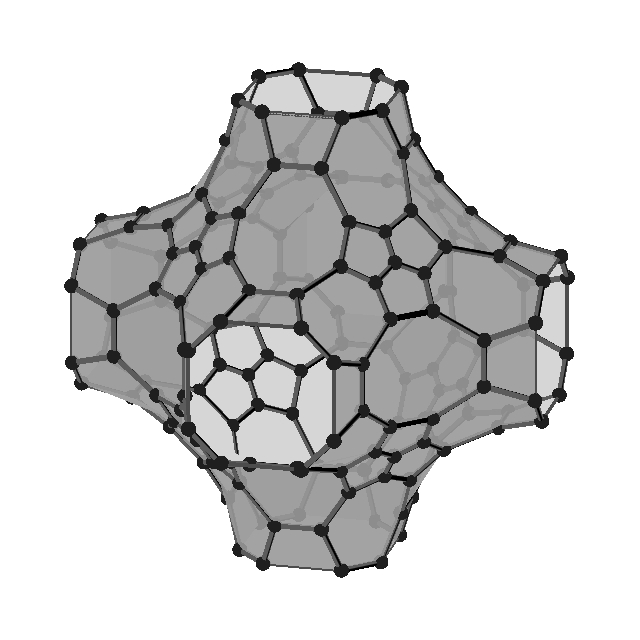}
    &
    \includegraphics[bb=0 0 640 640,scale=\scalebase]{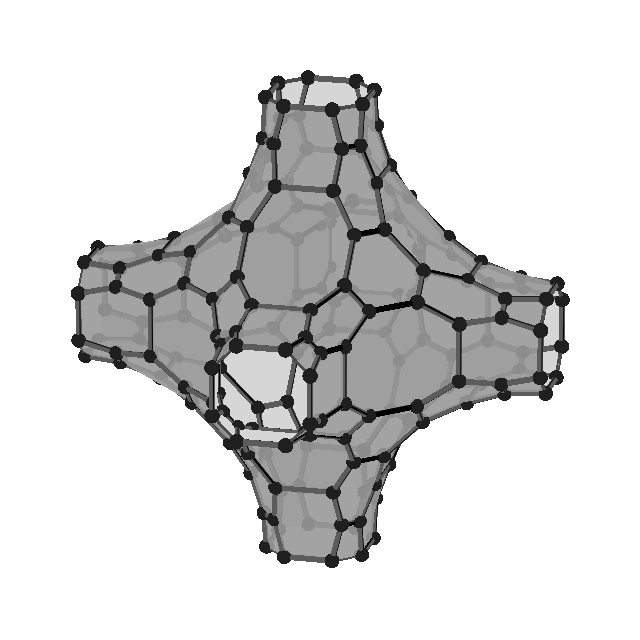}
    \cr
    \hline
    \vspace{0.01pt} & & & 
    \cr
    (d)
    &
    \includegraphics[bb=0 0 640 640,scale=\scalebase]{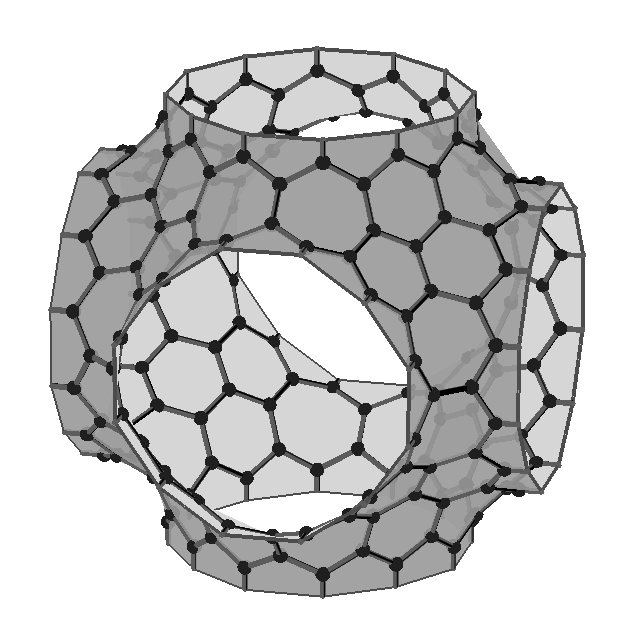}
    &
    \includegraphics[bb=0 0 640 640,scale=\scalebase]{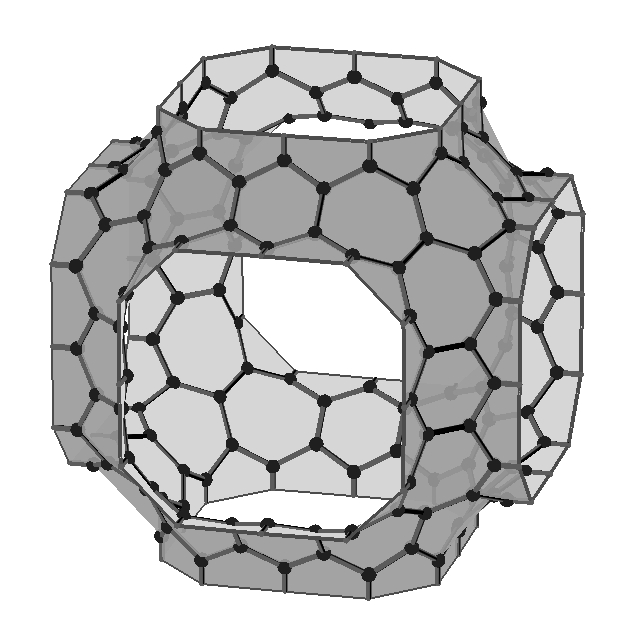}
    &
    \includegraphics[bb=0 0 640 640,scale=\scalebase]{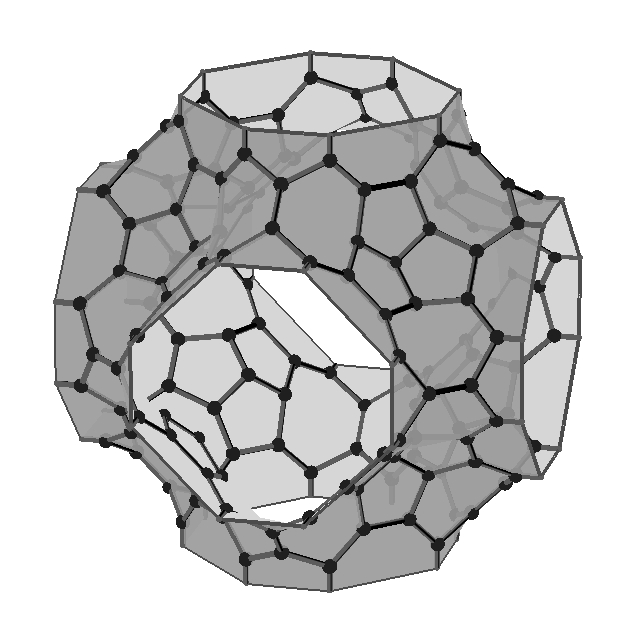}
    &
    \includegraphics[bb=0 0 640 640,scale=\scalebase]{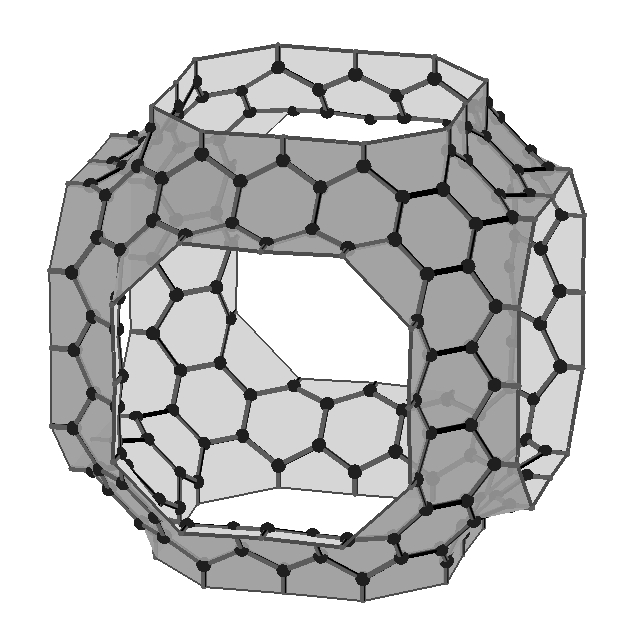}
    \cr
    &
    \includegraphics[bb=0 0 640 640,scale=\scalebase]{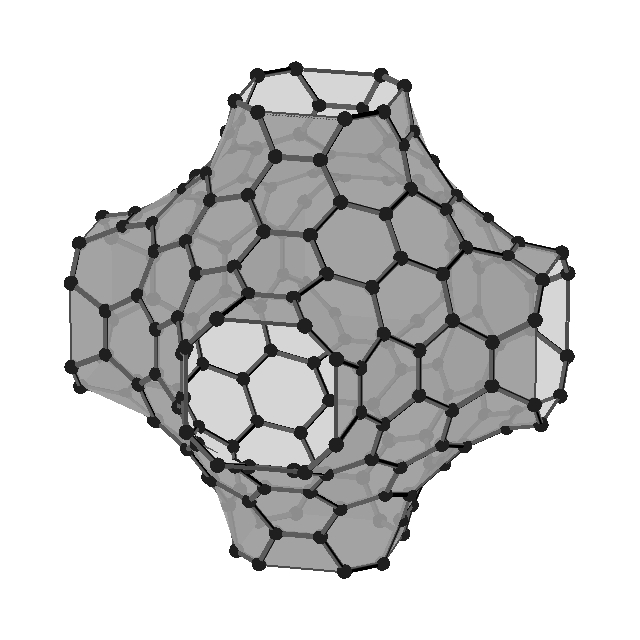}
    &
    \includegraphics[bb=0 0 640 640,scale=\scalebase]{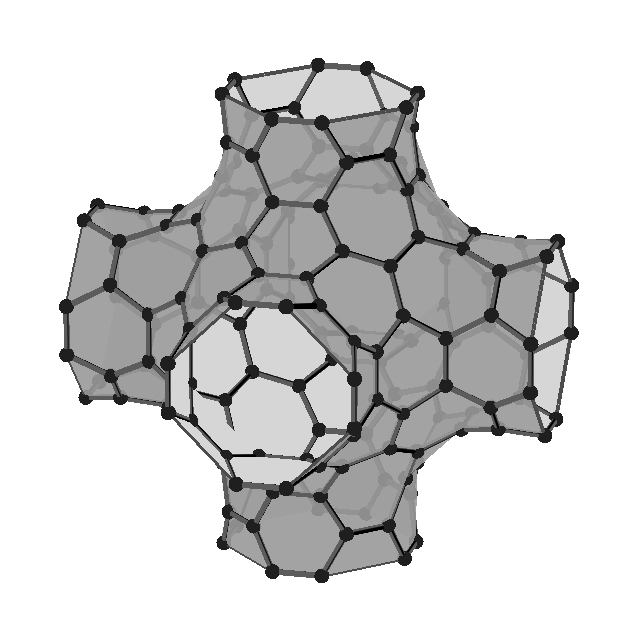}
    &
    \includegraphics[bb=0 0 640 640,scale=\scalebase]{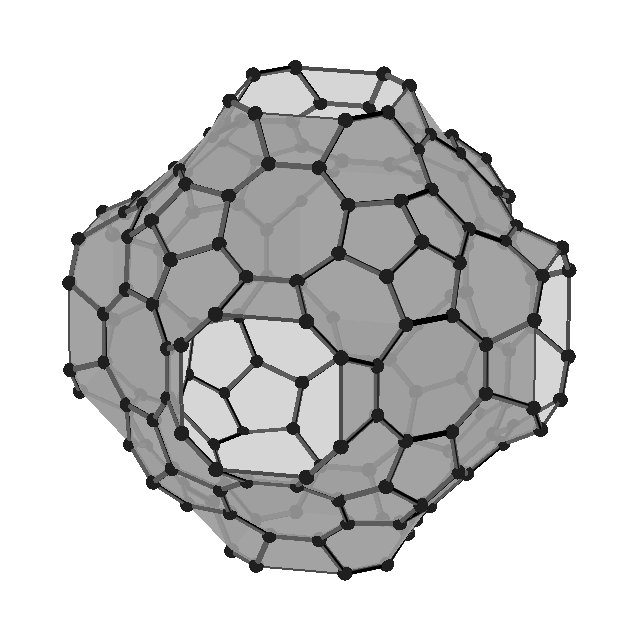}
    &
    \includegraphics[bb=0 0 640 640,scale=\scalebase]{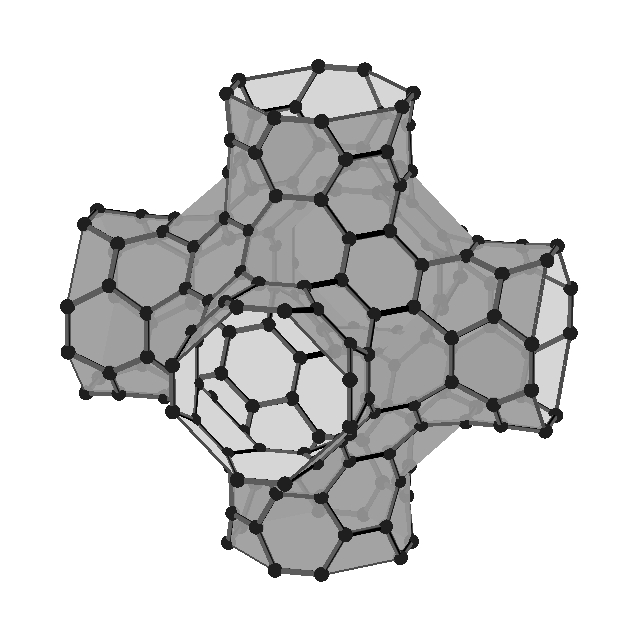}
    \cr
  \end{tabular}
  \caption{
    (a) The network in the kite region of each structure.
    (b) The network in the hexagonal regions.
    (c) The standard realization.
    (d) The relaxed configuration.
    Each figure in the row (c) or (d) is the same structure, but different figure.
    Shifting unit cell by $(1/2)(\vv{e}_x + \vv{e}_y + \vv{e}_z)$, 
    a figure in the row (c) or (d) changes to another one.
  }
  \label{fig:result}
\end{figure}

%%% Local Variables: 
%%% mode: japanese-latex
%%% TeX-master: t
%%% End: 
%%%%% END INCLUDE: Inputs/fig_result.tex
%%%%%%%%%%%%
%%%%%%%%%%%%%%%%%%%%%%%%%%%%%%%%%%%%%%%%%%
%%%%%%%%%%%%%%%%%%%%%%%%%%%%%%%%%%%%%%%%%%
%%%%%%%%%%%%%%%%%%%%%%%%%%%%%%%%%%%%%%%%%%
\section{Further problems}
Finally, we note further problems about negatively curved carbon structures.
\par
In addition to the Schwarz P-surface, 
there are well-known triply periodic minimal surfaces, 
namely, the Schwarz D-surface and G-surface (gyroid surface), 
and there are many studies about {\sptwo} carbon schwarzites related to D- and G-surfaces, 
for example \cite{Hyde-1, Hyde-2, Lenosky, H.Terrones, Other1}.
In this paper, 
we discuss trivalent networks, which reticulate the $P$-surface, 
and construct examples of physically stable carbon {\sptwo} structures.
Now, we are interested in the following problem:
Whether we can construct examples of physically stable carbon {\sptwo} structures 
which reticulate D- or G-surfaces using our method?
For example, 
calculate the physical stability of 
the carbon {\sptwo} structures that is constructed from 
\name{6-1-1}, \name{6-1-2}, \name{6-1-3} or \name{8-4-2} and
reticulates the D- or G-surface. 
\par
In this note, we claim that a discrete surface is negatively curved, 
if and only if the total discrete curvature is negative.
In other words, 
a discrete surface is negatively curved
if the structure is placed on a surface with negative curvature.
The first definition is not a point-wise characterization of negativity, 
and the last one is not rigorous.
On the other hand, 
there are definitions of curvature for general discrete surfaces.
For examples, 
\cite{Muller, Bobenko, Mayer}.
One well-known definition of curvature is 
\begin{equation}
  \label{eq:futher:1}
  K(\vv{x}) = 2 \pi - \sum_{j=0}^{k-1} \theta_j, 
  \quad
  \theta_j = \angle \vv{x}_j \vv{x} \vv{x}_{j+1}
\end{equation}
where $\{\vv{x}_j\}_{j=0}^{k-1}$ are the neighbours of $\vv{x}$ (see \cite{Mayer, Karcher-Polthier}).
However, since the structures
that we consider
are trivalent and they are standard realizations, 
each vertex and its neighbours are co-planer, 
therefore $K(\vv{x}) \equiv 0$ when calculated by (\ref{eq:futher:1}).
Definitions of curvature in \cite{Muller, Bobenko} assume meshed surfaces.
Hence, these definitions do not apply to
trivalent standard realizations of discrete surfaces.
We should consider how to define the curvature of trivalent standard realizations of discrete surfaces.
%%%%%%%%%%%%%%%%%%%%%%%%%%%%%%%%%%%%%%%%%%
%%%%%%%%%%%%%%%%%%%%%%%%%%%%%%%%%%%%%%%%%%
%%%%%%%%%%%%%%%%%%%%%%%%%%%%%%%%%%%%%%%%%%
\section*{Appendix: Indexes of single wall nanotubes}
\label{sec:appendix}
We summarize the geometric structure of single wall nanotubes (SWNTs).
Mathematically, a SWNT is considered as the fundamental region of 
the $\Z$-action on the standard realization of the hexagonal lattice.
In the followings, we explain the geometric structures of SWNTs.
\par
First, we define
\begin{displaymath}
  \begin{aligned}
    &\vv{v}_0 = (0, 0), 
    \quad
    \vv{v}_1 = (-\sqrt{3}/2, 1/2), 
    \quad
    \vv{v}_2 = (\sqrt{3}, 1/2), 
    \quad
    \vv{v}_3 = (0, -1), 
    \\
    &\vv{a}_1 
    = \vv{v}_2 - \vv{v}_1 = (\sqrt{3}, 0)
    \quad
    \vv{a}_2 
    = \vv{v}_3 - \vv{v}_1 = (\sqrt{3}/2, -3/2), 
  \end{aligned}
\end{displaymath}
then the graph $X_0 = (V_0, E_0)$, 
$V_0 = \{\vv{v}_i\}_{i=0}^3$, 
$E_0 = \{(\vv{v}_0, \vv{v}_i)\}_{i=1}^3$
is the fundamental region of the hexagonal lattice, 
and $\{\vv{a}_1, \vv{a}_2\}$ is the basis of the parallel transformations
(see Figure \ref{fig:SWNT}).
Here, the angle between $\vv{a}_1$ and $\vv{a}_2$ is $\pi/3$.
By using the basis, 
we may write all vertices $\vv{v}$ of the hexagonal lattice as
\begin{displaymath}
  \vv{v} = \alpha_1 \vv{a}_1 + \alpha_2 \vv{a}_2, 
  \quad
  (\alpha_1, \alpha_2) \in \Z \times \Z, 
  \quad
  \text{or}
  \quad
  (\alpha_1, \alpha_2) \in (\Z-1/3) \times (\Z-1/3).
\end{displaymath}
\begin{definition*}
  \label{definition:chiral}
  The vector $\vv{c} = c_1 \vv{a}_1 + c_2 \vv{a}_2$ is called 
  the \emph{chiral vector} 
  and $(c_1, c_2)$ is called the \emph{chiral index}
  if and only if
  the SWNT is constructed from the hexagonal lattice by identifying 
  $\vv{x}$ and $\vv{x} + \vv{c}$.
  \par
  The vector 
  $\vv{t} = t_1 \vv{a}_1 + t_2 \vv{a}_2$, 
  where $(t_1, t_2) = ((c_1 + 2 c_2)/d(c), -(2c_1 + c_2)/d(\vv{c}))$, 
  and 
  $d(\vv{c}) = \gcd(c_1 + 2 c_2, 2c_1 + c_2)$,  
  is called the \emph{lattice vector} of the SWNT with chiral index $\vv{c} = (c_1, c_2)$ (see Figure \ref{fig:SWNT}).
\end{definition*}
The chiral vector indicates the direction of the circle of the SWNT, 
and is the period vector of the $\Z$-action on the hexagonal lattice.
The chiral vector is orthogonal to the lattice vector, 
and the lattice vector is the minimum period of the hexagonal lattice
along the tube axis of the SWNT.
The circumference $L(\vv{c})$ of the SWNT is derived from the chiral index by
\begin{math}
  L(\vv{c}) 
  = |\vv{c}|
  = \sqrt{3(c_1^2 + c_2^2 + c_1 c_2)}.
\end{math}
The electronic properties of single wall carbon nanotubes depends on
the chiral index (see \cite{Saito}).
\par
Next, let us consider finite length single wall nanotubes.
More precisely, consider a finite length SWNT, 
which terminates at the vertices (atoms) 
with $\vv{x}_0 = \vv{0}$ and 
$\vv{x}_1 = \alpha_1 \vv{a}_1 + \alpha_2 \vv{a}_2$, 
and characterize the length of this SWNT by $\alpha_1$ and $\alpha_2$.
In the followings, 
$\SWNT(\vv{c}, \alpha_1 \vv{a}_1 + \alpha_2 \vv{a}_2)$ 
denotes the SWNT which is terminated by $\vv{0}$ and $\alpha_1 \vv{a}_1 + \alpha_2 \vv{a}_2$ and whose chiral vector is $\vv{c}$.
It is easy to show that 
the length of $\SWNT(\vv{c}, \vv{x})$ is $\left|\inner{\vv{x}}{\vv{e}_t}\right|$, 
where $\vv{e}_t = \vv{t}/|\vv{t}|$.
In \cite{Isobe-Naito}, 
we define the \emph{length index} $\ell(\vv{c}, \vv{x})$ 
of $\SWNT(\vv{c}, \vv{x})$ as
\begin{displaymath}
  \ell(\vv{c}, \vv{x})
  = 
  \left|\inner{\vv{x}}{\vv{e}_t}\right|/\sqrt{3}
  =
  \frac{\sqrt{3}|c_2 \alpha_1 - c_1 \alpha_2|}{2 \sqrt{c_1^2 + c_2^2 + c_1 c_2}}, 
  \quad
  \vv{x} = \alpha_1 \vv{a}_1 + \alpha_2 \vv{a}_2.
\end{displaymath}
The length index corresponds to how many hexagons are arranged in the direction of the tube axis.
\par
Now, we consider 
$\SWNT(\vv{c}, \vv{t})$, 
which has 
the canonical length for the given chiral index $\vv{c}$.
By using the definition of $\vv{t}$, 
we obtain the length index $\ell(\vv{c}, \vv{t})$ of $\SWNT(\vv{c}, \vv{t})$ as
\begin{displaymath}
  \ell(\vv{c}, \vv{x})
  =
  \frac{\sqrt{3(c_1^2 + c_2^2 + c_1 c_2)}}{d(\vv{c})}
  =
  \frac{L(\vv{c})}{d(\vv{c})}, 
\end{displaymath}
and its area as
\begin{displaymath}
  \Area(\SWNT(\vv{c}, \vv{t}))
  =
  \sqrt{3} \ell(\vv{c}, \vv{t}) L(\vv{c})
  =
  \frac{\sqrt{3} L(\vv{c})}{d(\vv{c})}.
\end{displaymath}
\par
Finally, we calculate the number of hexagons in the fundamental region of 
$\SWNT(\vv{c}, \vv{t})$, 
which is the fundamental region of action generated by the lattice 
$\{\vv{c}, \vv{t}\}$.
Since all hexagons in the lattice are congruent and their volume are 
$3\sqrt{3}/2$, 
the fundamental region contains $F$ hexagons, where
\begin{equation}
  \label{eq:SWNT:F}
 F = \frac{2L(\vv{c})^2}{3d(\vv{c})}.
\end{equation}
Combining (\ref{eq:VEF}) and (\ref{eq:SWNT:F}), 
we obtain the number of vertices $V$, of edges $E$, and of faces $F$ 
in Table \ref{table:euler}.
%
%%%%%%%%%%%%%%%%%%
%%%%% INCLUDE: Inputs/fig_SWNT.tex
%%%%% file = ./Inputs/fig_SWNT.tex
\begin{figure}
  \centering
  \includegraphics[scale=0.75,clip=true]{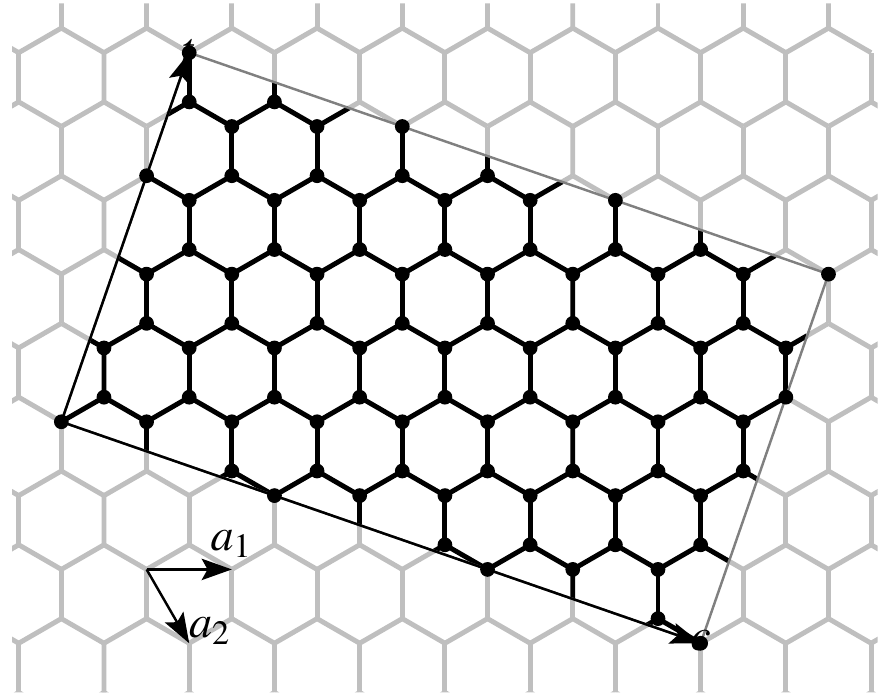}
  \caption{
    Configuration of the single-wall carbon nanotube, 
    with $\vv{c} = (6,3)$ and $\vv{t} = (4, -5)$.
  }
  \label{fig:SWNT}
\end{figure}

%%% Local Variables: 
%%% mode: japanese-latex
%%% TeX-master: t
%%% End: 
%%%%% END INCLUDE: Inputs/fig_SWNT.tex
%%%%%%%%%%%%%%%%%%
\begin{mynote}{Note about figures}
  Part of Figure \ref{fig:result} and Figure \ref{fig:schwartz-P}(a) are 
  gray scale version of figures published in \cite{Naito-Carbon}.
  Figures \ref{fig:schwartz-P}(a), (b), 
  Figure \ref{fig:8-2-1}
  and Figures \ref{fig:result}(c), (d) of \name{6-1-1-P} 
  are also published in \cite{Naito-Suurikagaku}.
\end{mynote}

\begin{acknowledgement}
  The author gratefully thanks to 
  Prof. Davide M. Proserpio who informed us 
  about articles on this subject and that
  \name{6-1-2-P} is almost the same structure as C152 in \cite{Park}.
  The author also thanks 
  Professor Motoko Kotani and Professor Yasumasa Nishiura.
  They gave the author an opportunity to talk at the Symposium 
  ``Mathematical Challenge to a New Phase of Materials Science''
  in Kyoto, 2014.
  The author was partially supported by Grants-in-Aid for Scientific Research (C) 40211411 and (A) 15H02055.
\end{acknowledgement}

\begin{mynote}{Dedication}
  I dedicate this paper to my late wife Yumiko Naito (September 8 1963 -- October 2 2015). 
  While battling breast cancer, she spent her days with a positive and enthusiastic attitude, supporting my research and caring for our son.
  My research up to now could not have existed without her support.
  I dedicate this paper to Yumiko's support during 24 years of marriage and to the wonderful memories of the 30 years since we first me.
  To my wife Yumiko, with gratitude and love.
\end{mynote}

% \bibliography{kyoto}
%%%%% INCLUDE: naito_bbl.tex
%%%%% file = ./naito_bbl.tex

%%%%% END INCLUDE: naito_bbl.tex

\end{document}